\def\doi{8(3:23)2012}
\newenvironment{definition}{\begin{defi}}{\end{defi}}
\newenvironment{theorem}{\begin{thm}}{\end{thm}}
\newenvironment{proposition}{\begin{prop}}{\end{prop}}
\newenvironment{remark}{\begin{rem}}{\end{rem}}
\newenvironment{corollary}{\begin{cor}}{\end{cor}}
\newenvironment{claim}{\medskip\noindent\textit{Claim.\ }}{\medskip}
\begin{document}

\title[Reachability Analysis of Communicating Pushdown Systems]{Reachability Analysis of Communicating Pushdown Systems\rsuper*}

\author[A.~Heußner]{Alexander Heußner}	
\address{LaBRI, Universit{\'e} de Bordeaux, CNRS -- France}	
\email{\{alexander.heussner, jerome.leroux, anca, gregoire.sutre\}@labri.fr}  

\author[J.~Leroux]{Jérôme Leroux}	

\author[A.~Muscholl]{Anca Muscholl}	

\author[G.~Sutre]{Grégoire Sutre}	



\keywords{Reachability analysis, communicating processes, pushdown systems.}
\subjclass{D.2.4, F.2}
\titlecomment{{\lsuper*}An extended abstract of this paper appeared in FoSSaCS'10.}


\begin{abstract}
  \noindent The reachability analysis of recursive programs that
  communicate asynchro\-nously over reliable \fifo channels calls for
  restrictions to ensure decidability. Our first result characterizes
  communication topologies with a decidable reachability problem
  restricted to eager runs (i.e., runs where messages are either received
  immediately after being sent, or never received). 
 The problem is
  \dexptime-complete in the decidable case.  The second result is a
  doubly exponential time algorithm for bounded context analysis in
  this setting, together with a matching lower bound. Both results
  extend and improve previous work from~\cite{latorre-s-2008-299-a}.
\end{abstract}

\maketitle

\section*{Introduction}%

Checking safety properties for distributed programs like client/server
environments, peer-to-peer applications, or asynchronous programs on
multi-core processors is a standard task in
verification. However, it is well established that the automatic
analysis of distributed programs is a quite challenging
objective. %

A basic feature of the programs used in the applications mentioned above
is that they need to exchange information asynchronously, over
point-to-point channels that are unbounded and reliable. Such
information is used for instance to perform function calls on remote
processes. This amounts to considering a model that combines recursion with
asynchronous communication. Such a combined model is similar in spirit
to, e.g., process rewrite systems~\cite{mayr}, that mix
recursion and Petri nets. We denote the combination of
recursion and asynchronous communication as~\emph{Recursive
  Communicating ProcesseS} (\rqcp for short) here. The model has been
recently studied by La\,Torre, Madhusudan, and
Parlato~\cite{latorre-s-2008-299-a}, who were mainly interested in
applying bounded context analysis to this setting.

Since \rqcp subsume the well-studied class of communicating
finite-state machines~\cite{brand-d-1983-323-a}, reachability is
already undecidable without recursion. Moreover, it is well-known that
reachability for pushdown systems that synchronize by rendezvous
 is undecidable as well~\cite{ramalingam-g-2000-416-a}.
Therefore, our main motivation was to separate these two
sources of undecidability. We consider here behavioral restrictions
for which reachability for communicating \emph{finite-state} machines
is decidable, and then look under which conditions
recursion can be added to the model.

The reachability question for communicating finite-state machines can
be tackled in three different ways, either by restricting
the communication topology, or by assuming that channels are lossy, or by
considering only executions on channels of fixed size. In general,
the last two approaches provide approximated solutions to the
reachability problem. On the positive side, the last idea yields exact
solutions in some special cases, either for certain restricted
topologies (e.g., acyclic ones) or
under certain behavioral restrictions on the communication (e.g., mutex
communication, see below).

As already mentioned, our starting point is the work of La\,Torre et
al.~\cite{latorre-s-2008-299-a}. They introduced a syntactic
restriction on the combined use of channels and pushdowns, that prevents
the synchronization of pushdowns leading to an
undecidable reachability question. An \rqcp is called \emph{well-queueing}
in~\cite{latorre-s-2008-299-a} if pushdown processes can only read
messages when their stack is empty (they can send messages without
any restriction). Well-queueing expresses an event-based programming
paradigm: tasks are executed by threads without interrupt, i.e., a
thread accepts the next task only after it finished the current
one. One of the results of \cite{latorre-s-2008-299-a} is that
well-queueing \rqcp have a decidable reachability problem if and only
if the topology is a directed forest; in the decidable case, they
provide a doubly exponential algorithm by a reduction to bounded-phase
multi-stack pushdown systems~\cite{latorre-s-2007-161-a}.

We extend the results
of~\cite{latorre-s-2008-299-a} in several directions. First, we add a
dual notion to well-queueing: a pushdown process can send messages
only with empty stack (but can read messages without
restriction). This dual notion arises naturally if one wants to model
interrupts: a server might need to accept tasks from high priority
clients independently of the status of the running task. We use these two
restrictions  by fixing the type of each communication channel, to be either
well-queueing or the dual notion. A communication topology, together
with channel types, is called a \emph{typed topology}.

We give in Section~\ref{sect:converge} a precise characterization of those
typed topologies for which the \rqcp model has a
decidable reachability problem over so-called
\emph{eager} runs. A run is eager if the
sending of a message is immediately followed by its
reception (if any). This notion is closely related to 
bounded communication~\cite{lohrey-m-2004-160-a}. 
Communicating finite-state machines with existential channel bounds,
i.e., where each run \emph{can} be
reordered into a run over bounded channels, are a
well-studied model enjoying good expressiveness and
decidability properties~\cite{genest-b-2006-920-a}\footnote{Machines with the
property that  each run can be reordered into an eager
one, are a special instance of existentially 1-bounded
machines. Eagerness is related 
to a \emph{global} channel bound~\cite{lohrey-m-2004-160-a}.}.
Here, we
simply use  eager runs in order to rule out undecidability
due to unbounded channels, since reachability for finite-state
communicating machines over eager runs is decidable. We show that reachability
of \rqcp over eager runs is \dexptime-complete in the
decidable case. Our result generalizes and improves the
doubly exponential time decision procedure of
\cite{latorre-s-2008-299-a}, which holds for topologies without undirected
cycles (called \emph{polyforests}).

The restriction to eager runs appears to be strong at a first glance.
However, we show in Section~\ref{sect:mutex} that it arises rather
naturally, by imposing a behavioral restriction on the communication:
the \emph{mutex} restriction requires that in every reachable
configuration there is no more than one non-empty channel per cycle of
the network. In particular, \rqcp over polyforest architectures are
mutex. Mutex can also be  seen as a generalization of the half-duplex
restriction studied in \cite{cece-g-1997-304-a}.

La\,Torre et al.~propose in~\cite{latorre-s-2008-299-a}  a second
approach to solve the reachability problem for \rqcp, inspired by successful
work on reachability with bounded contexts in the
verification of concurrent Boolean programs~\cite{qadeer-s-2005-93-a}.
They show that bounded-context reachability for
well-queueing \rqcp is decidable in time doubly exponential in the
number of contexts. Again, this result is obtained by a reduction to
bounded-phase multi-stack pushdown
systems~\cite{latorre-s-2007-161-a}. Our result in Section~\ref{sect:bounded}
extends the bounded-context result of~\cite{latorre-s-2007-161-a} to
\rqcp that allow for the two dual notions of well-queueing.
Moreover, our algorithm is direct and simpler than the one
involving bounded-phase multi-stack pushdown
systems. We also provide a matching lower bound for the complexity.

\medskip
\noindent{\it Related work.\ }
In the context of multi-thread programming, other notions of
synchronization between pushdowns arise naturally. Earlier
publications considered synchronization via shared memory, such as
local/global memory in
\cite{bouajjani-a-2005-348-a,bouajjani-a-2005-437-a} or bags in
\cite{sen-k-2006-300-a,jhala-r-2007-339-a}. The paper
\cite{bouajjani-a-2005-348-a} showed that bounded-context reachability
can be solved in exponential time, whereas \cite{sen-k-2006-300-a}
provided an exponential space lower bound for reachability with atomic
methods (without
context bounds).  Also, synchronization in the form of state
observation was considered in~\cite{atig-m-2008-356-a}. The latter
model was shown to be decidable only for acyclic architectures, and is
strongly related to lossy
systems~\cite{abdulla-p-1996-91-a,finkel-a-2001-63-a}.  For the shared
memory model, \cite{kidd10} shows how to reduce concurrent pushdowns
to a single pushdown, assuming a priority preemptive scheduling policy.  Lately,
\cite{seth-a-2010-615-a,atig-m-2010-117-b} proposed a general strategy
to reduce bounded-phase reachability questions on different
multi-stack pushdown automata models to a single stack. This is close
in spirit to our proof technique in Section~\ref{sect:converge},
although we do not rely on a phase-bounded model for our first
result.

\section{Recursive Communicating Processes}

Given a set $P$ and a $P$-indexed family of sets $(S^p)_{p \in P}$, we 
write elements of the Cartesian product $\prod_{p \in P} S^p$ in bold face.  For
any $\vec{s}$ in $\prod_{p \in P} S^p$ and any $p \in P$, we let $s^p \in S^p$
denote the $p$-component of  $\vec{s}$.  Moreover, we identify $\vec{s}$
with the indexed family of elements $(s^p)_{p \in P}$.

An \emph{alphabet} is any finite set of \emph{letters}.  Given an
alphabet $\Sigma$, we write $\Sigma^*$ for the set of all \emph{finite words}
(\emph{words} for short) over $\Sigma$, and we let $\varepsilon$ denote the
\emph{empty word}.  %

A \emph{labeled transition system} (LTS for short)
$\Lts=\tuple{S,s_\init,A,\rightarrow}$ is given by a set
of~\emph{states} $S$, an initial state $s_\init$, an 
\emph{action} alphabet $A$, and a (labeled)~\emph{transition relation}
$\rightarrow$, which is a subset of $S\times A\times S$. For
simplicity, we usually write $s\act{a}s'$ in place of $(s,a,s')\in\
\rightarrow$. The \emph{size} of $\Lts$ is defined by
$\size{\Lts} = \size{S}^2 \cdot \size{A}$
when $S$ is finite.

Throughout the paper we use standard complexity classes such as
polynomial space (\pspace), deterministic exponential time (\dexptime), and
deterministic doubly-exponential time (\twodexptime).  For detailed definitions
the reader is referred to, e.g., \cite{papadimitriou-c-1994--a}.

\subsection{Communication Topologies}

In this paper, we consider processes from a finite set $P$, that
communicate over point-to-point, error-free \fifo channels from a set
$C$. They exchange messages over a given  topology, which
is simply a directed graph whose vertices are processes and whose
edges represent channels:

\begin{definition}
  A \emph{topology} $\Topo$  is a tuple $\tuple{P,C,\src,\tgt}$ where $P$
  is a finite set of \emph{processes}, and $C$ is a finite set
  of point-to-point \emph{channels} equipped with two functions
  $\src,\tgt:C\rightarrow P$ that map every channel $c\in C$ to a
  \emph{source} $\src(c) \in P$ and a \emph{destination}
  $\tgt(c) \in P$, such that $\src(c) \neq \tgt(c)$.
\end{definition}

The \emph{size} of $\Topo$ is defined by $\size{\Topo} = \size{P}+\size{C}$.
For each channel $c \in C$, we write $\dchannel{}{c}{}$ for the binary
relation on the set of processes $P$ defined by $p\dchannel{}{c}{}q$
if $p=\src(c)$ and $q=\tgt(c)$. We also use the undirected
binary relation $\uchannel{}{c}{}$, defined by $p\uchannel{}{c}{}q$ if
$p\dchannel{}{c}{}q$ or $q\dchannel{}{c}{}p$.

An \emph{undirected path} in $\Topo$ is an alternating sequence
$(p_0,c_1,p_1,\ldots,c_n,p_n)$, of processes $p_i \in P$ and
channels $c_i \in C$,
such that $p_{i-1}\uchannel{}{c_i}{}p_i$ for all $i$.
Moreover, the undirected path is called \emph{simple} if the processes
$p_0, \ldots, p_n$ are distinct.
A \emph{simple undirected cycle} in $\Topo$ is an undirected path
$(p_0,c_1,p_1,\ldots,c_n,p_n)$ with $p_0 = p_n$
such that $p_1, \ldots, p_n$ are distinct, and
$c_1, \ldots, c_n$ are distinct. 
The topology $\Topo$ is called \emph{polyforest} if it contains no
simple undirected cycle.

\subsection{Communicating  Processes}

Consider a topology $\Topo = \tuple{P,C,\src,\tgt}$.
Given a message alphabet $M$, we denote by $\Com^p(\Topo,M)$ the set
of \emph{possible communication actions} of a process $p \in P$,
defined by $\Com^p(\Topo,M)=\{c!m \mid c\in C, \src(c)=p, m\in
M\}\cup\{c?m \mid c\in C, \tgt(c)=p, m\in M\}$.  As usual, $c!m$
denotes sending message $m$ into channel $c$, whereas $c?m$
denotes receiving message $m$ from channel $c$.
Note that $\Com^p(\Topo,M)$ and $\Com^q(\Topo,M)$ are disjoint when
$p$ and $q$ are distinct processes.

\begin{definition}
  A \emph{system of communicating processes (\qcp for short)}
  $\Qcp=\tuple{\Topo,M,(\Lts^p)_{p\in P}}$ is given by a topology
  $\Topo$, a \emph{message} alphabet $M$, and, for each process
  $p\in P$, an LTS $\Lts^p=\tuple{S^p,s_\init^p,A^p,\rightarrow_p}$
  such that:
  \begin{iteMize}{$\bullet$}
  \item the action alphabets $A^p$, $p \in P$, are pairwise disjoint, and
  \item $A^p_\actcom = A^p \cap (C\times\{!,?\}\times M)$ is contained in
    $\Com^p(\Topo,M)$ for each $p \in P$.
  \end{iteMize}
\end{definition}

\noindent Actions in $A^p_\actcom$ are called \emph{communication actions} of
$p$, whereas $A_\actloc^p=A^p\moins A_\actcom^p$ is the set of
\emph{local actions}. States $s^p\in S^p$ are called
\emph{local states} of $p$. We write $\vec{S}=\prod_{p\in P}S^p$ for the
set of \emph{global states}. %
Note that the sets $S^p$, and
hence $\vec{S}$, may be infinite.  Indeed, the local transition
systems $\Lts^p$ could be, for example, counter
or pushdown systems.  When $\vec{S}$
is finite, $\Qcp$ is called a \emph{finite} \qcp,
and its \emph{size} is defined by
$\size{\Qcp} = \size{\Topo} + \size{M} + \sum_{p \in P} \size{\Lts^p}$.

\smallskip

As usual, the semantics of \qcp is defined in terms of a global LTS
  $\tuple{X,x_\init,A,\rightarrow}$, where $X = \vec{S} \times
  (M^*)^C$ is the set of \emph{configurations}, $x_\init = (\vec{s_\init},
   (\varepsilon)_{c \in C})$ is the initial configuration,
  $A = \bigcup_{p \in P} A^p$ is the set of actions, and
  $\rightarrow \mathop{\subseteq} X \times A \times X$ is
  the transition relation with $(\vec{s_1}, \vec{w_1}) \act{a} (\vec{s_2},
  \vec{w_2})$, where $a \in A^p$, if 
  the following conditions are satisfied:
  \begin{enumerate}[(i)]
  \item $s_1^p \xrightarrow{a}_p s_2^p$ and $s_1^q = s_2^q$ for all $q
    \in P$ with $q \neq p$,
  \item if $a \in A^p_\actloc$ then $\vec{w_1} = \vec{w_2}$,
  \item if $a = c!m$ then $w_2^c = w_1^c \cdot m$ and $w_2^d = w_1^d$ for all $d \in C$ with $d \neq c$,
  \item if $a = c?m$ then $m \cdot w_2^c = w_1^c$ and $w_2^d = w_1^d$ for all $d \in C$ with $d \neq c$.
  \end{enumerate}
Given a process $p \in P$, we call~\emph{move} of $p$ any transition
$x_1 \act{a} x_2$ with $a \in A^p$. A move is local if $a \in
A_\actloc^p$.

\medskip

A \emph{run} in the LTS $\Qcp$ is a
finite, alternating sequence $\rho=(x_0,a_1,x_1,\ldots,a_n,x_n)$ of
configurations $x_i\in X$ and actions $a_i\in A$ satisfying
$x_{i-1} \act{a_i} x_i$ for all $i$. We say
that $\rho$ is a run from $x_0$ to $x_n$. The
\emph{length} of $\rho$ is $n$, and is denoted by $|\rho|$.
A run of length zero consists of a single configuration.
The \emph{trace} of a run
    $\rho = (x_0,a_1,x_1,\ldots,a_n,x_n)$ is the sequence of actions
    $\trace(\rho) = a_1 \cdots a_n$.
A pair of send/receive
actions $a_i=c!m, a_j=c?m$ is called~\emph{matching} in
$\rho$ if $i<j$ and the number of receives on $c$ within
$a_i \cdots a_j$ equals the length of $c$ in $x_i$.
If $\rho,\rho'$ are two runs such that the last configuration of $\rho$
is equal to the first configuration of $\rho'$, then we write
$\rho\cdot\rho'$ for their concatenation.

We
define the \emph{order-equivalence} relation $\sim$ over runs as the
finest congruence such that $(x_0,a,x_1,b,x_2) \sim
(x_0,b,x'_1,a,x_2)$ whenever $a,b$  are actions on different processes.
Informally, $\rho \sim \rho'$ if they can be
transformed one into the other by iteratively commuting adjacent
transitions that (i) are \emph{not} located on the same
process, and (ii) do \emph{not} form a matching send/receive pair.
The following is easy to check:

\begin{fact}
  If $\rho,\rho'$ are order-equivalent runs of a \qcp, then
  they start in the same configuration and end in the same configuration.
\end{fact}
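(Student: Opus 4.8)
The plan is to exploit directly the way $\sim$ is defined: it is the \emph{finest} congruence (i.e.\ smallest equivalence relation on runs closed under run-concatenation) that contains all the elementary swap pairs $(x_0,a,x_1,b,x_2) \sim (x_0,b,x'_1,a,x_2)$ with $a,b$ on different processes. So I would introduce the relation $\approx$ on runs defined by $\rho \approx \rho'$ iff $\rho$ and $\rho'$ start in the same configuration and end in the same configuration, show that $\approx$ is itself such a congruence, and conclude $\sim\, \subseteq\, {\approx}$, which is exactly the statement.

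Three things then have to be checked. First, $\approx$ contains every generating pair: this is immediate, since in $(x_0,a,x_1,b,x_2) \sim (x_0,b,x'_1,a,x_2)$ both runs are already presented as starting in $x_0$ and ending in $x_2$, with no constraint needed on the intermediate configurations. Second, $\approx$ is an equivalence relation: reflexivity, symmetry and transitivity are trivial, because "having the same first configuration" and "having the same last configuration" are each equivalence relations and $\approx$ is their intersection. Third — the only step with any content — $\approx$ is closed under concatenation: if $\rho_1 \approx \rho'_1$, $\rho_2 \approx \rho'_2$, and $\rho_1\cdot\rho_2$ and $\rho'_1\cdot\rho'_2$ are both defined, then the first configuration of $\rho_1\cdot\rho_2$ is that of $\rho_1$, hence that of $\rho'_1$, hence that of $\rho'_1\cdot\rho'_2$; symmetrically for the last configurations via $\rho_2 \approx \rho'_2$. (One can even observe that "$\rho'_1\cdot\rho'_2$ defined" follows automatically: the last configuration of $\rho'_1$ equals that of $\rho_1$, which equals the first configuration of $\rho_2$, which equals that of $\rho'_2$.)

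I do not anticipate a genuine obstacle here; the only point deserving a moment's care is to match the word "congruence" in the definition of $\sim$ with "equivalence relation closed under the concatenation operation on runs", so that the three verifications above really do establish $\sim\, \subseteq\, {\approx}$. An equally short alternative would be an induction on the length of a derivation witnessing $\rho \sim \rho'$, using that one elementary swap performed inside a context $\alpha\cdot(\,\cdot\,)\cdot\beta$ alters neither the first nor the last configuration of the whole run; I would still prefer the congruence-closure argument, as it avoids having to formalise what a "derivation" is.
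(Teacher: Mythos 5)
Your argument is correct: since $\sim$ is by definition the finest (i.e.\ smallest) congruence containing the elementary swaps, and the relation ``same first configuration and same last configuration'' is an equivalence compatible with concatenation that contains every generating pair $(x_0,a,x_1,b,x_2)\sim(x_0,b,x'_1,a,x_2)$, the inclusion $\sim\,\subseteq\,\approx$ follows at once. The paper offers no proof at all (it only remarks that the fact ``is easy to check''), and your congruence-closure argument --- or equivalently the induction on swap steps you mention --- is exactly the justification the authors leave implicit, so nothing is missing.
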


\medskip

A configuration $x\in X$ is \emph{reachable} in a \qcp $\Qcp$ if there
exists a run of $\Qcp$ from the initial configuration $x_\init$ to
$x$.  We define the \emph{reachability set} of $\Qcp$ as
$\Reach(\Qcp)=\{x \in X \mid x \text{ is reachable in } \Qcp\}$.

\smallskip

The~\emph{state reachability problem} for \qcp asks, for a given \qcp
$\Qcp$ and a global state $\vec{s} \in \vec{S}$, whether 
$\Reach(\Qcp)$ intersects
$\{\vec{s}\} \times (M^*)^C$.  It is well-known that this problem is
undecidable for finite \qcp, even if we restrict the
topology %
to two
processes connected by two channels~\cite{brand-d-1983-323-a}.

The undecidability of the state reachability problem for \qcp is based on
the fact that one cannot control how ``fast'' messages are received.
A simple idea that rules out such behaviors is to consider only runs
where the reception is immediate (if it exists):

\begin{definition}
  A run $\rho=(x_0,a_1,x_1,\ldots,a_n,x_n)$ is \emph{eager} if for all
  $1 \leq i \leq n$, if $a_i$ is a receive action then $i > 1$ and
  $a_{i-1}$ is its matching send action. %
    \label{def:bounded_run}
\end{definition}
Thus, each send action along an eager run is either immediately
followed by its matching receive, or it is never matched.  In
the latter case, all later sends into the channel are never received,
and we say that the channel is in its ``growing phase''.  In the
former case, the adjacent matched send/receive actions act like a
rendezvous synchronization between the two processes.  Formally, given
a channel $c \in C$, we call \emph{rendezvous on $c$} any run (of
length~2) $\rho = (x, c!m, x', c?m, x'')$ such that $x= (\vec{s},
\vec{w})$ with $w^c = \varepsilon$. The rendezvous~\emph{involves
  process $p$} if $p \in \{\src(c),\tgt(c)\}$.

We introduce now the ``eager'' variants of the reachability notions
presented previously.  A configuration $x\in X$ is
\emph{eager-reachable} in a \qcp $\Qcp$ if there exists an eager run
from the initial configuration $x_\init$ to $x$.  The
\emph{eager-reachability set} of $\Qcp$ is the set
$\Reach_\eager(\Qcp)$ of eager-reachable configurations.  We say that
a \qcp $\Qcp$ is~\emph{eager} when $\Reach_\eager(\Qcp) =
\Reach(\Qcp)$. In the next section, we show how eager \qcp 
occur under some natural (and decidable) restrictions on cyclic
communication. The simplest example arises over polyforest
topologies. 
The \emph{state eager-reachability problem} for \qcp
asks, for a \qcp $\Qcp$ and a global state $\vec{s}\in
\vec{S}$, whether $\Reach_\eager(\Qcp)$ intersects $\{\vec{s}\} \times
(M^*)^C$.  It is readily seen that this problem is decidable for
\emph{finite} \qcp in \pspace.

Eager runs, modulo the fact that Definition~\ref{def:bounded_run}
allows for runs which end in a sequence of (unmatched) send actions,
are closely related to the notion of globally 1-bounded runs. Eager
\qcp subsume existentially globally 1-bounded communicating
machines~\cite{lohrey-m-2004-160-a,genest-b-2007-1-a}. However, as we
will see in Section~\ref{sect:mutex}, it is undecidable whether a
finite \qcp is eager (in contrast, one can decide whether a finite,
deadlock-free communicating machine is existentially globally 1-bounded~\cite{genest-b-2007-1-a}). On the
positive side, Section~\ref{sect:mutex} shows a decidable subclass of
finite, eager \qcp.

\subsection{Recursive Communicating Processes}\label{ssec:rqcp}

In the following we introduce \rqcp together with a symmetric version
of the ``well-queueing'' restriction used in
\cite{latorre-s-2008-299-a}.
Informally, \rqcp (recursive \qcp) are  \qcp where each local
transition system is a pushdown system.

A well-queueing %
\rqcp in
\cite{latorre-s-2008-299-a} is one where a process can only receive
when its stack is empty. Here, we dualize this concept by also allowing
channels where the sender (but not the receiver) must have an empty
stack. Well-queueing was motivated
in~\cite{latorre-s-2008-299-a} by the case where recursive processes
need to finish their tasks before accepting new ones. Adding the dual notion
of well-queueing is interesting when modeling interrupts: a
recursive process may have to interrupt its current task to treat one
with a higher priority, hence, it has to preserve its current state on
the stack to return later.

\begin{definition}
  A \emph{typed} topology $\tuple{\Topo,\tau}$ consists of a topology
  $\Topo$, together with a \emph{type} $\tau \subseteq P \times C$, such that $(p,c)
  \in\tau$ implies $p \in \{\src(c),\tgt(c)\}$.
  \label{def:oriented}
\end{definition}

Given a process $p \in P$ and a channel $c \in C$, we call
$p$~\emph{restricted} on $c$ if $(p,c) \in \tau$ (and
\emph{unrestricted} otherwise). Informally, a communicating pushdown
process $p$ as defined below will be restricted on $c$ if $p$'s stack
must be empty when communicating over channel $c$.

\begin{definition} 
  A \emph{pushdown system} $\Pdp = \tuple{Z, z_\init, A,
    A_\varepsilon,\Gamma, \Delta}$ is given by a finite set $Z$ of
  \emph{control states}, an \emph{initial control state} $z_\init \in
  Z$, an alphabet $A$ of \emph{actions},
  a subset $A_\varepsilon \subseteq A$,
  a stack alphabet $\Gamma$, and
  a transition relation $\Delta \subseteq Z \times A \times Z$, such that
  $A$ contains the set $A_\stack = \{\push(\g),\pop(\g) \mid \g \in \G\}$
  of \emph{stack actions}.
\end{definition}
We define the \emph{size} of $\Pdp$ by $\size{\Pdp} = \size{Z}^2 \cdot
\size{A}$.
Actions in $A_\varepsilon \subseteq A \setminus A_\stack$ are tests for
empty stack.
Naturally, for a pushdown system embedded in a \qcp, the set of actions
$A \setminus A_\stack$ may contain communication (and local) actions.
Depending on the typed topology, some communication actions
may require an empty stack.  This will be enforced by putting these
communication actions in the set $A_\varepsilon$.

According to the informal description given above, we define now the
semantics of pushdown processes.
The semantics of $\Pdp = \tuple{Z, z_\init, A, A_\varepsilon,\Gamma, \Delta}$
is the LTS $\tuple{S,s_\init,A,\rightarrow}$ with set of states
$S=Z \times \Gamma^*$, initial state $s_\init = (z_\init, \varepsilon)$,
and (labeled) transition relation $\rightarrow$ defined as expected:
stack actions $\push(\g)$ and $\pop(\g)$ behave as usual ($\pop(\g)$ blocks
if the top of the stack is not $\g$), actions from $A \setminus A_\stack$
do not change the stack, and actions in $A_\varepsilon$ are possible only if
the stack is empty.

\begin{definition}
  A \emph{recursive \qcp} (\rqcp for short)
  $\Rqcp=\tuple{\Topo,\tau,M,(\Pdp^p)_{p\in P}}$ is given by a typed
  topology $\tuple{\Topo,\tau}$, a~\emph{message} alphabet $M$, and,
  for each  process $p\in P$,  a
  pushdown system
  $\Pdp^p=(Z^p,z_\init^p,A^p,A^p_\varepsilon,\Gamma^p,\Delta^p)$ such that:
  \begin{iteMize}{$\bullet$}
  \item the action alphabets $A^p$, for $p \in P$, are pairwise disjoint,
  \item $A^p_\actcom = A^p \cap (C\times\{!,?\}\times M)$ is contained in
    $\Com^p(\Topo,M)$ for each $p \in P$, and
  \item $A^p_\varepsilon \supseteq \{c!m \in A^p_\actcom \mid (p, c) \in \tau\} \cup
    \{c?m \in A^p_\actcom \mid (p, c) \in \tau\}$ for each $p \in P$.
  \end{iteMize}
\end{definition}
We associate with $\Rqcp$ the \qcp %
$\tuple{\Topo,M,(\Lts^p)_{p\in P}}$
where, for each $p \in P$, the LTS $\Lts^p$ is the semantics of the
pushdown system $\Pdp^p$.
The \emph{size} of $\Rqcp$ is defined by
$\size{\Rqcp} = \size{\Topo} + \size{M} + \sum_{p \in P} \size{\Pdp^p}$.

\smallskip

We write $\vec{Z}=\prod_{p\in P}Z^p$ for the set
\emph{global control states}.
Abusing notation, a global state $\vec{s}$ of $\Rqcp$ will also be written
$\vec{s} = (\vec{z}, \vec{u})$ where $s^p = (z^p, u^p)$ for each $p \in P$.
The~\emph{state reachability problem} for \rqcp
asks, for a given \rqcp
$\Rqcp$ and a global control state $\vec{z} \in \vec{Z}$, whether
$\Reach(\Rqcp)$
intersects
$\{\vec{z}\} \times (\prod_{p \in P} (\Gamma^p)^*) \times (M^*)^C$.
The \emph{state eager-reachability problem} for \rqcp is defined similarly,
using $\Reach_\eager(\Rqcp)$ instead of $\Reach(\Rqcp)$.

\section{Topologies with Decidable State Reachability}
   \label{sect:converge}

   Several factors lead to the undecidability of the state
   reachability problem for \rqcp.  In particular, the model is
   already undecidable without any pushdown.  Our goal in this section
   is a decidability condition that concerns the interplay between
   pushdowns and communication, assuming that the communication is
   \emph{not} the reason for undecidability.  For this reason, we consider a
   restricted version of the state reachability problem, namely the
   one on eager runs.

\begin{definition}\label{def:conv}
  A typed topology $\tuple{\Topo,\tau}$ is called
  \emphconverging if it contains a simple undirected path
  $(p_0,c_1,p_1,\ldots,c_n,p_n)$, with $n \geq 1$, such that
  $p_0$ is unrestricted on $c_1$ and $p_n$ is unrestricted on $c_n$.
\end{definition}

Notice that non-\convergence
implies that
every channel is either restricted
at the source, or at the destination, or at both ends
(see Figure~\ref{fig:non-confluent-examples}).

\begin{figure}[b]
  \centering
\begin{tikzpicture}[
  node distance=2cm,
  decoration={
    markings,
    mark=at position 0.62 with {\arrow{triangle 60}}
  }]

  \node (p)  {$p$};
  \node (q1)  [below left of=p]  {$q_1$};
  \node (qn)  [below right of=p] {$q_n$};

  \draw      (p) edge [*-o] (q1)
  [decorate] (p) to         (q1)
  ;
  \draw      (p) edge [*-o] (qn)
  [decorate] (p) to         (qn)
  ;

  \draw [dotted, bend angle=30, bend right]
  ([xshift=1mm] q1.north east) to ([xshift=-1mm] qn.north west)
  ;
\end{tikzpicture}
\hspace{0.75cm}
\begin{tikzpicture}[node distance=0.5cm]
  \node (p)  {};
  \node (p') [right of=p, anchor=west]  {\small\em restricted};
  \node (q)  [below of=p] {};
  \node (q') [right of=q, anchor=west] {\small\em unrestricted};

  \draw (p) edge [o-] (p')
        (q) edge [*-] (q')
  ;
\end{tikzpicture}
\hspace{0.75cm}
\begin{tikzpicture}[
  proc/.style = {}
  ]

   \draw (20:1.5)  node[proc] (1) {$p_1$};
   \draw (110:1.5) node[proc] (2) {$p_2$};
   \draw (200:1.5) node[proc] (3) {$p_3$};
   \draw (290:1.5) node[proc] (4) {$p_4$};

   \begin{scope}[
     decoration={
       markings,
       mark=at position 0.54 with {\arrow{triangle 60}}
     }]

     \draw      (1) edge[*-o,bend right=30] (2)
     [decorate] (1) to  [    bend right=30] (2)
     ;
     \draw      (2) edge[*-o,bend right=30] (3)
     [decorate] (2) to  [    bend right=30] (3)
     ;
     \draw      (3) edge[*-o,bend right=30] (4)
     [decorate] (3) to  [    bend right=30] (4)
     ;
     \draw      (4) edge[*-o,bend right=30] (1)
     [decorate] (4) to  [    bend right=30] (1)
     ;
   \end{scope}

   \begin{scope}[
     decoration={
       markings,
       mark=at position 0.46 with {\arrowreversed{triangle 60}}
     }]

     \draw      (1) edge[o-o,bend left=20] (2)
     [decorate] (1) to  [    bend left=20] (2)
     ;
     \draw      (2) edge[*-o,bend left=20] (3)
     [decorate] (2) to  [    bend left=20] (3)
     ;
     \draw      (3) edge[*-o,bend left=20] (4)
     [decorate] (3) to  [    bend left=20] (4)
     ;
     \draw      (4) edge[o-o,bend left=20] (1)
     [decorate] (4) to  [    bend left=20] (1)
   ;
   \end{scope}
\end{tikzpicture}
  \caption{Examples of non-\converging typed topologies}
  \label{fig:non-confluent-examples}
\end{figure}
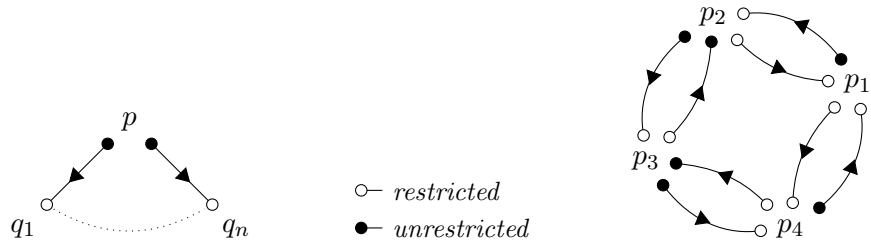

We say that a typed topology $\tuple{\Topo,\tau}$ has a decidable \rqcp state
eager-reachability problem if the latter question is decidable for the class of
\rqcp with typed topology $\tuple{\Topo,\tau}$.  We show in this section that the notion of
\convergence gives a complete characterization of typed topologies
with respect to the decidability of the above problem.

\begin{theorem}\label{thm:conv}
  A typed topology has a decidable \rqcp state
 eager-reachability problem  if and only if it is non-\converging.
  Moreover, the problem is \dexptime-complete in the latter case.
\end{theorem}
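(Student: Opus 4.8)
The plan is to treat the two directions separately. For the undecidability direction, I would show that a $\converging$ typed topology always permits an encoding of the state reachability problem of a pushdown system synchronising with a second one by rendezvous — a problem known to be undecidable — or, more directly, a reduction from the halting problem of a two-counter machine. Concretely, given a simple undirected path $(p_0,c_1,p_1,\ldots,c_n,p_n)$ witnessing $\convergence$, with $p_0$ unrestricted on $c_1$ and $p_n$ unrestricted on $c_n$, I would use the two endpoint processes $p_0$ and $p_n$ as the two unbounded stacks, and the path between them as a (possibly longer) rendezvous relay. The crucial point is that eagerness forces each send on an internal, doubly-restricted channel to be immediately matched (since a restricted endpoint can communicate only with empty stack, it cannot accumulate a backlog), so the internal channels behave like synchronous rendezvous links and can faithfully pass control tokens along the path; only $p_0$ and $p_n$ are allowed to have non-empty stacks while communicating, so they are the ones carrying the two counters. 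Each increment/decrement/zero-test of a counter is simulated by a stack push/pop/empty-test on $p_0$ (resp.\ $p_n$), synchronised with the control process via a rendezvous chain along the path. This yields the "only if" part: every $\converging$ typed topology has an undecidable \rqcp state eager-reachability problem.

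For the decidability direction I would assume the typed topology $\tuple{\Topo,\tau}$ is non-$\converging$, so that by the remark following Definition~\ref{def:conv} every channel is restricted at its source, at its destination, or at both. The key structural observation I expect to exploit is that in an eager run every communication either (a) is a matched send/receive pair that acts as a rendezvous between two processes, at least one of which must have an empty stack at that moment (because it is restricted on that channel), or (b) is an unmatched send, after which the channel is permanently in its growing phase and its contents are irrelevant to state reachability. I would first argue that, for the purpose of state reachability, one may delete all channels in their growing phase and all unmatched sends, reducing to runs built entirely from rendezvous synchronisations over the non-$\converging$ typed topology. Then I would set up a saturation / product construction: since each rendezvous touches a process with an empty stack, one can "split" runs at these empty-stack moments and track, for each process, the set of (control-state, control-state) pairs reachable by a segment that starts and ends with empty stack — this is computable from the pushdown $\Pdp^p$ by standard pushdown-reachability (hence in polynomial time, with the relevant context-free reachability relation being regular). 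The non-$\convergence$ condition is exactly what guarantees that these empty-stack "anchor points" are globally consistent, so that the whole system reduces to a finite \qcp over eager runs — whose state reachability is decidable in \pspace\ by the remark after the definition of eager-reachability — but with an exponential blow-up coming from the summarised rendezvous relations, giving the \dexptime\ upper bound. The matching \dexptime\ lower bound I would obtain by a reduction from the reachability problem of a single pushdown system against an exponential-size component, or more cheaply by reusing a hardness result for reachability in such bounded-communication pushdown products from the cited literature.

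The main obstacle I expect is the decidability direction, specifically making precise and correct the claim that non-$\convergence$ lets one "anchor" all rendezvous at empty-stack configurations in a globally consistent way. One has to rule out the scenario where a process is forced to perform two rendezvous, on two different channels on both of which it is restricted, while being unable to empty its stack between them — and more subtly, to handle a channel that is restricted at one end only, where the unrestricted end may have an arbitrary stack during the rendezvous: here the argument must show that such a rendezvous can be commuted (using order-equivalence, Definition preceding the Fact) past the stack operations of the unrestricted process until it lines up with an empty-stack moment of the restricted process, without creating new matching pairs. Getting the commutation / normalisation argument right, and then bounding the size of the resulting finite-state abstraction to land exactly in \dexptime, is where the real work lies; the undecidability direction and the \pspace\ membership for finite \qcp are comparatively routine given the cited results.
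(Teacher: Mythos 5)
Your undecidability direction is essentially the paper's argument (two stack-carrying endpoint processes agreeing on a common word/instruction sequence through a chain of finite-state relays), and it would go through, modulo one detail you omit: in an eager run a send may simply never be received, so without an end-marker forcing every channel to be drained a spurious run could let the two endpoints agree only on a prefix; the paper fixes this with a $\mathtt{\$}$-symbol sent and received on each channel at the end. The \dexptime\ lower bound via $K\cap\bigcap_i L_i$ on a star topology is also as in the paper.

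The genuine gap is in the decidability direction. Your plan is to summarize each process between empty-stack moments (a finite relation computed by pushdown saturation), anchor every rendezvous at such a moment, and thereby ``reduce to a finite \qcp\ over eager runs.'' This fails in general: non-\convergence\ only guarantees that \emph{one} endpoint of each channel is restricted, so the other endpoint may perform arbitrarily many rendezvous, on several channels and with several partners, while its stack is non-empty and never returns to empty in between. Such a process cannot be replaced by a finite empty-stack-to-empty-stack summary, and since different channels may have different unrestricted endpoints (see the cyclic example in Figure~\ref{fig:non-confluent-examples}), there is in general no single ``hub'' whose pushdown you could keep while summarizing all others. You flag exactly this scenario as your main obstacle but do not resolve it; resolving it is the heart of the paper's proof. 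The paper's missing ingredient is Proposition~\ref{prop:wf-to-wb-for-confluent}: every eager well-formed run can be reordered (preserving eagerness) into a \emph{well-bracketed} one, proved by scheduling the active process's push--pop block as early as possible and then using non-\convergence\ along the chain of communications inside each interrupting segment to show that all interrupting processes start and end that segment with empty stacks. Well-bracketing is what licenses simulating \emph{all} stacks on a single global stack of a product pushdown of size $n^{\mathcal{O}(|P|\cdot|C|)}$, whence \dexptime. Your complexity accounting also does not land where you need it: even if a reduction to an exponential-size finite \qcp\ existed, running the \pspace\ eager-reachability procedure on it gives only an \textsc{expspace} bound, not the claimed \dexptime\ upper bound.
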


The rest of the section is devoted to the proof of this theorem.
We first show the undecidability result in the \converging case.

\begin{proposition}
  Every \converging typed topology has an undecidable \rqcp
  state (eager-) reachability problem. 
\end{proposition}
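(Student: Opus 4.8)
The plan is to reduce from a known undecidable problem—most naturally, the halting/state-reachability problem for a two-stack pushdown automaton (equivalently, a Turing machine), or alternatively from the Post Correspondence Problem—by simulating it on a \converging typed topology. The key structural ingredient is the simple undirected path $(p_0,c_1,p_1,\ldots,c_n,p_n)$ guaranteed by \convergence, where $p_0$ is unrestricted on $c_1$ and $p_n$ is unrestricted on $c_n$. The idea is that $p_0$ and $p_n$ will each host one of the two stacks of the simulated machine, using their pushdown capabilities, while the intermediate processes $p_1,\ldots,p_{n-1}$ (and the channels) merely serve as a communication relay, copying messages along the path. Since $p_0$ is unrestricted on $c_1$, it may send and receive on $c_1$ with a non-empty stack—so its stack is genuinely usable as one of the two simulated stacks while it communicates; symmetrically for $p_n$ on $c_n$. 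The restricted-ness of the other endpoints of the path is harmless: the relay processes $p_1,\ldots,p_{n-1}$ are finite-state (empty stack throughout), so requiring an empty stack for their communication actions costs nothing.

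First I would set up the simulation so that the two-counter (or two-stack) machine's control is carried, say, by $p_0$, and each simulated step that touches the "far" stack is implemented by a round-trip: $p_0$ sends a request down the path $c_1,\ldots,c_n$ to $p_n$, $p_n$ performs the corresponding $\push$/$\pop$ (or a top-symbol test) on its local stack, and sends the acknowledgement/answer back up the path $c_n,\ldots,c_1$ to $p_0$. Using eager runs makes this round-trip behave like a sequence of rendezvous synchronizations along the path, so no channel ever needs to hold more than one pending message; in particular the construction works even for the \emph{eager}-reachability problem, which is what the proposition claims (the parenthetical "(eager-)"). The local stack of $p_0$ is manipulated directly by $p_0$ between round-trips. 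One must check that each endpoint only ever communicates on the path-channel where it is unrestricted ($p_0$ on $c_1$, $p_n$ on $c_n$), so the empty-stack obligations imposed by $\tau$ are never triggered for the two stack-carrying processes. The target global control state is "both machines in their halting control states," and I would argue that this state is (eager-)reachable in the \rqcp iff the simulated machine halts.

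The main obstacle I anticipate is \textbf{getting the synchronization right along a path of length $n>1$ with intermediate restricted processes, while staying within eager runs.} Concretely: a relay process $p_i$ must receive on one incident path-channel and send on the next, and if $p_i$ is restricted on either channel it must do so with empty stack—fine, since relays are finite-state—but one has to ensure the handshake protocol prevents two different round-trips (one heading "down", one heading "up") from interleaving and corrupting the relay, and that a round-trip cannot deadlock. Designing a simple token-passing protocol (a single message bouncing along the path, with the message carrying the opcode and, on the way back, the answer bit) handles this; the bookkeeping is routine but must be done carefully so that the \qcp is eager "for free" on these runs, or at least so that the relevant reachability is witnessed by an eager run. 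A secondary, easier point is handling the degenerate case $n=1$ (the path is a single channel $c_1=c_n$ with both endpoints unrestricted), where the two processes communicate directly and the construction collapses to a simpler mutual simulation—essentially the classical undecidability of rendezvous-synchronizing pushdowns, which one may either cite (as in~\cite{ramalingam-g-2000-416-a}) or re-derive inline.
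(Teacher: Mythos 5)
Your high-level strategy is the same as the paper's: exploit the path $(p_0,c_1,\ldots,c_n,p_n)$ with $p_0,p_n$ unrestricted at their ends, let these two endpoints carry the pushdowns, use the intermediate processes as finite-state relays with empty stacks, and note that under eager runs matched send/receive pairs act as rendezvous. (The paper reduces from emptiness of the intersection of two context-free languages rather than from two-stack machine halting, but these are interchangeable sources of undecidability.) However, as written your protocol has two concrete gaps.

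First, you treat the path channels as bidirectional. The path is only \emph{undirected}: each $c_i$ has a fixed, arbitrary orientation given by $\src(c_i),\tgt(c_i)$, and a process can only send on channels it is the source of and receive on channels it is the target of. So the ``request down the path, answer back up over the same channels $c_1,\ldots,c_n$'' round-trip, and likewise a token ``bouncing along the path,'' are simply not implementable---e.g.\ if $\src(c_1)=p_1$ then $p_0$ can never send anything on $c_1$ at all. The repair is to give up directional data transfer and instead make all processes agree, letter by letter, on a single nondeterministically guessed word (encoding, in your setting, the whole interleaved dialogue of opcodes and answers): each process performs on $c_i$ whichever action its orientation dictates, and wrong guesses are pruned because the rendezvous blocks. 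This is exactly what the paper's construction does with its $\lhd/\rhd$ relays, and it is why a reduction that only needs \emph{agreement} (such as CFL-intersection emptiness) fits the setting more directly than a simulation that seems to need information to travel back to $p_0$.

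Second, your correctness condition ``both machines in their halting control states'' is not sound against unmatched sends, which eager runs explicitly allow (a send may be followed immediately by its receive \emph{or never received}). If, say, $p_0=\src(c_1)$ and $p_n=\src(c_n)$, neither endpoint ever performs a blocking receive on the path, so each can run ``solo,'' pushing its entire fantasized dialogue into a growing channel and reaching its halting state even when the simulated machine does not halt. To rule this out you must require every process (relays included) to reach a designated accepting state and force the channels to be empty at the end, e.g.\ by exchanging a final marker $\mathtt{\$}$ on each used channel; by FIFO, receipt of the marker certifies that all earlier sends were matched, hence genuine rendezvous. The paper does precisely this. With these two fixes your reduction goes through, but both ingredients are needed and neither is routine bookkeeping of the kind your plan defers.
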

\begin{proof}
  Consider a typed topology $\tuple{\Topo,\tau}$ that is \converging.  There
  is a simple undirected path $p_0 \uchannel{}{c_1}{} p_1
  \cdots p_{n-1} \uchannel{}{c_n}{} p_n$
  satisfying the conditions of Definition~\ref{def:conv}.
  Since $p_0$ is unrestricted on $c_1$ and $p_n$ is
  unrestricted on $c_n$, both may use their stack while communicating
  over the channels $c_1$ and $c_n$, respectively. 
  Recall that checking non-emptiness of the intersection of two context-free languages is undecidable.
  To prove the lemma, we reduce this problem to the state
  eager-reachability problem for \rqcp with typed topology $\tuple{\Topo,\tau}$.

  Given two context-free languages $K$ and $L$ over the alphabet
  $\{\mathtt{0}, \mathtt{1}\}$, the process $p_0$ guesses a word in
  $K$ while $p_n$ guesses a word in $L$, and both processes check that
  they guessed the same word via synchronizations along the undirected path $p_0
  \uchannel{}{c_1}{} p_1 \cdots p_{n-1} \uchannel{}{c_n}{} p_n$. 
  Intermediate processes $p_1, \ldots, p_{n-1}$ do not use their
  stack, they simply convey the information about the common input
  guessed by $p_0$ and $p_n$.  The labeled transition system
  $\Lts^{p_i}$, $1 \leq i < n$, is depicted below.

\begin{tikzpicture}[node distance=2.5cm, >=stealth', bend angle=30]
  \tikzstyle{initial}   = [initial   by arrow, initial   text=, initial   above, initial   distance=0.4cm]
  \tikzstyle{accepting} = [accepting by arrow, accepting text=, accepting below, accepting distance=0.4cm]

  \tikzstyle{every state} = [draw=gray, thick, fill=gray!20, minimum size=4mm, inner sep=2pt]

  \node [state]  (q1)  [initial, accepting]         {};
  \node [state]  (q2)  [left of=q1]      {};
  \node [state]  (q3)  [right of=q1]     {};
  \node                [right of=q3, node distance=3cm]     {%
$\begin{array}{rcl}
\lhd & = &
\begin{cases}
  ! & \mbox{if } p_i = \src(c_i)\\
  ? & \mbox{if } p_i = \tgt(c_i)\\
\end{cases}
\\
\rhd & = &
\begin{cases}
  ! & \mbox{if } p_i = \src(c_{i+1})\\
  ? & \mbox{if } p_i = \tgt(c_{i+1})\\
\end{cases}
\end{array}$};

  \path[->]  (q1)  edge [bend right]  node [above]   {$c_i \lhd \mathtt{0}$}     (q2)
             (q2)  edge [bend right]  node [below]   {$c_{i+1} \rhd \mathtt{0}$}     (q1)
             (q1)  edge [bend left]   node [above]   {$c_i \lhd \mathtt{1}$} (q3)
             (q3)  edge [bend left]   node [below]   {$c_{i+1} \rhd \mathtt{1}$} (q1)
  ;
\end{tikzpicture}%

  Similarly, the pushdown systems $\Pdp^{p_0}$ and $\Pdp^{p_n}$ are obtained
  from pushdown automata accepting $K$ and $L$, respectively, by replacing
  tape-reading actions with communications ($c_1 \rhd \mathtt{0}/\mathtt{1}$
  for $p_0$ and $c_n \lhd \mathtt{0}/\mathtt{1}$ for $p_n$).

 Finally, we only need to make sure that the channels are empty at the end.
  As usual, this can be enforced by augmenting $M$ with a new symbol
  $\mathtt{\$}$, and by sending and receiving $\mathtt{\$}$ on each
  channel $c$ at the end of the simulation. 

The construction guarantees that the intersection $K \cap L$ is
non-empty if and only if there is an (eager) run in the \rqcp from the
initial configuration to a global control state where each process is
accepting. 
\end{proof}

We now focus on non-\converging typed topologies.
Let us first prove the \dexptime lower bound of
Theorem~\ref{thm:conv}.

\begin{proposition} \label{prop:non-confluent-lb} The state
 eager-reachability problem for  \rqcp
  with non-\converging typed topology is  \dexptime-hard.
\end{proposition}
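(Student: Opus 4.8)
The plan is to reduce from a standard \dexptime-complete problem about pushdown systems, namely the emptiness of the intersection of the languages of several (here, two) pushdown automata is not what we want — that is undecidable — so instead I would reduce from the acceptance problem of an alternating polynomial-space Turing machine, which is \dexptime-complete, or equivalently from a suitable "product of pushdowns with bounded synchronization" problem. Concretely, the cleanest route is to observe that state reachability in a single pushdown system synchronized with a finite automaton is polynomial, but the reachability problem for the \emph{intersection} of a pushdown language with a \emph{regular} language given succinctly (e.g.\ as a product of polynomially many small DFAs, whose product is exponential) is \dexptime-hard; even simpler, I would use the fact that emptiness of the intersection of a context-free language with the language of an NFA is in P, so that is too weak, and instead reduce from the membership problem for alternating pushdown automata, or from the nonemptiness problem for the intersection of an exponentially-sized regular language and a context-free language presented succinctly.

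Here is the construction I would actually carry out. Since the typed topology is non-\converging, it is nonempty, so it has at least one channel $c$, and by non-\convergence that channel is restricted at the source, or at the destination, or both. I would pick one process $p$ and one incident channel $c$, and design an \rqcp in which a single pushdown process $p$ does all the recursive work, while the channel $c$ (and possibly a second process, the partner across $c$) is used only as a \emph{rendezvous device} that lets $p$ consult an exponentially large finite-state "opponent" or "constraint automaton". Because eager runs on a single channel behave like rendezvous synchronizations (as noted right after Definition~\ref{def:bounded_run}), the partner across $c$ acts as a finite-state machine that $p$ synchronizes with step-by-step; its local state space can be taken of polynomial size, but by choosing $M$ polynomially large and letting the partner cycle through exponentially many "virtual states" encoded in the message exchanged, one forces $p$ to solve a problem whose natural formulation has exponential-size regular structure. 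This is exactly the pattern used in the undecidability proof above (Proposition preceding this one), only now one side is finite-state, which keeps it decidable but still hard: reachability in a pushdown system intersected with an exponential-size DFA, where the DFA is given succinctly as a product of small components, is \dexptime-hard. I would implement the succinct DFA either (a) directly as the finite partner process whose message alphabet encodes a tuple of small-component states, or (b) if the available incident process across $c$ must itself be a restricted pushdown, simply ignore its stack. The empty-stack side-condition imposed by $\tau$ on $c$ is harmless: the process that is restricted on $c$ never pushes anything, so it always communicates with empty stack; the \emph{unrestricted} side — which exists because the path from the non-\convergence hypothesis is trivial when we only need one channel, but more importantly we only need a polyforest sub-use here — is the one running the genuine pushdown. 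Finally, as in the previous proof, I would append a closing handshake sending and receiving a fresh symbol $\$$ on every channel so that the target configuration has all channels empty, and set the target global control state to the "accepting" tuple.

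The main obstacle, and the step deserving the most care, is making the reduction source genuinely \dexptime-hard rather than merely \pspace-hard: a single pushdown synchronized with a \emph{polynomial-size} finite automaton is only polynomial-time, so the exponential blow-up must be smuggled in somewhere. The right source is therefore a problem that is \dexptime-hard already for pushdown systems of polynomial size together with a \emph{succinctly presented} exponential regular constraint — for instance, the reachability problem for a pushdown system that must simulate an alternating PSPACE Turing machine, where the exponentially many tape configurations are never materialised but are tracked incrementally through the messages exchanged on $c$. I would spend the bulk of the proof verifying that this simulation can be laid out on the single unrestricted pushdown process plus one finite partner of polynomial size, with all communication along a single channel and hence automatically eager, and that correctness (a computation is accepting iff the target control state is eager-reachable with empty channels) holds. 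Everything else — disjointness of action alphabets, placing the forced-empty-stack communications into $A_\varepsilon$, the closing $\$$-handshake, and checking that the constructed \rqcp has the prescribed typed topology (by making every process and channel not used in the simulation inert) — is routine.
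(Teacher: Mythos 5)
There is a genuine gap, and it is precisely at the point you yourself flag as ``the main obstacle.'' Your construction confines the reduction to a single channel $c$ between one unrestricted pushdown process $p$ and one finite-state partner of polynomial size. Over eager runs a single channel behaves as rendezvous (unmatched sends are never read, so nothing useful can be stored in the channel), hence the whole system is just the synchronized product of a polynomial-size pushdown with a polynomial-size finite LTS --- a pushdown system of polynomial size, whose state reachability is decidable in polynomial time. So this shape of reduction can never be \dexptime-hard. Your proposed fix --- ``exponentially many virtual states tracked incrementally through the messages exchanged on $c$'' or a ``succinct DFA implemented directly as the finite partner whose message alphabet encodes a tuple of small-component states'' --- does not repair this: the only persistent memory on the finite side is its polynomially many control states, and a message alphabet encoding a tuple of $n$ component states is itself exponentially large. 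You correctly identify the right complexity source (emptiness of a context-free language intersected with a succinctly presented exponential regular constraint, equivalently alternating \pspace{} machines), but you never say where, inside a polynomial-size \rqcp, that exponential object actually lives.

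The paper's answer is to let the topology grow with the instance: it reduces from emptiness of $K \cap \bigcap_{i=1}^n L_i$ for a context-free $K$ and $n$ regular languages $L_i$ (a \dexptime-complete problem), using the star topology $P=\{p,q_1,\ldots,q_n\}$ with channels $c_i$ from $p$ to $q_i$, where $p$ is unrestricted and each $q_i$ is restricted on $c_i$; $p$ simulates the pushdown for $K$ and broadcasts its guessed input letter by letter, each $q_i$ checks membership in $L_i$, and a final $\$$-handshake empties the channels. The exponential blow-up thus resides in the \emph{product of the $n$ independent finite checkers}, none of which is individually large, and the resulting typed topology is non-\converging because every channel is restricted at its destination. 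Your plan, by insisting that ``we only need one channel,'' discards exactly the resource that makes the hardness possible; to salvage it you would have to reintroduce unboundedly many restricted finite processes (or argue some other way to represent an exponential regular constraint with polynomial syntax), which is the missing idea.
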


\begin{proof}
It is well-known (and probably folklore) that the following problem is
\dexptime-complete: 
given a context-free language $K$ and $n$ regular languages $L_i$,
check the non-emptiness of $K \cap \bigcap_i L_i$.
The hardness follows easily by a reduction from linearly bounded
alternating Turing 
machines.
Actually, a closely related problem is shown to be \dexptime-hard
in~\cite{esparza-j-2003-355-a}, namely the reachability problem for
pushdown systems with
checkpoints.

Notice that the intersection $K \cap \bigcap_i L_i$ can be simulated
on the non-\converging, 
typed topology $\tuple{\Topo,\tau}$ where
$P = \{p, q_1, \ldots, q_n\}$, $C = \{c_1, \ldots, c_n\}$, and,
for each $1 \leq i \leq n$,
$p\dchannel{}{c_i}{} q_i$ with $p$ unrestricted on $c_i$ and
$q_i$ restricted on $c_i$ (see left part of
Figure~\ref{fig:non-confluent-examples}). That is, process $p$
simulates a pushdown automaton accepting the context-free language
$K$, whereas process $q_i$ simulates a finite-state automaton
accepting $L_i$. Communication guarantees that the simulations
use the same input word. As in the previous proposition, one needs to
enforce the emptiness of the channels by using an extra symbol.
\end{proof}

Before considering the upper bound we need to introduce some
vocabulary. Consider a run $\rho = (x_0,a_1,x_1,\ldots,a_n,x_n)$ of an
\rqcp $\Rqcp$.  Given a process $p\in P$, we say that $\rho$ is
\emph{well-formed for $p$} if the projection of $a_1 \cdots
a_n$ on $A^p_\stack$ is a  Dyck word.  This
well-formedness condition merely stipulates that each push action of
$p$ in $\rho$ is matched by a pop action, and vice versa.  We
call $\rho$ \emph{well-formed} if $\rho$ is well-formed for each
process $p \in P$.  For instance, every run that starts and ends with
empty stacks  is well-formed.  A stronger condition is that of
well-bracketing, which requires that push and pop actions for distinct
processes must be nested recursively.
Formally, we say that $\rho = (x_0,a_1,x_1,\ldots,a_n,x_n)$ is
\emph{well-bracketed} if the following two conditions are satisfied:
\begin{enumerate}
\item
  the projection of $a_1 \cdots a_n$ on the disjoint union
  $\bigcup_{p \in P} A^p_\stack$ is a Dyck word, and
\item
  for every process $p$ and every $h < i < j < k$,
  if the pairs $(a_h,a_k)$ and $(a_i,a_j)$ are matching push/pop actions
  of $p$, then the sub-runs $(x_{h-1},a_h,x_h,\ldots,a_i,x_i)$ and
  $(x_{j-1},a_j,x_j,\ldots,a_k,x_k)$ are
  well-formed for all $q \neq p$.
\end{enumerate}
Observe that if $\rho\cdot\rho'$ is defined, then $\rho\cdot\rho'$ is
well-formed (resp.~well-bracketed) if  $\rho$ and $\rho'$ are both
well-formed (resp.~well-bracketed).  Note also that well-formedness is
preserved under order-equivalence: if $\rho$ is well-formed and $\rho
\sim \rho'$ then $\rho'$ is also well-formed.  However,
well-bracketing is not preserved under order-equivalence.

The following proposition provides the main ingredient to show the
$\dexptime$ upper bound of Theorem~\ref{thm:conv}.

\begin{proposition} \label{prop:wf-to-wb-for-confluent} Given an \rqcp
  $\Rqcp$ with non-\converging typed topology, every eager,
  well-formed run in $\Rqcp$ is
  order-equivalent to an eager, well-bracketed run. 
\end{proposition}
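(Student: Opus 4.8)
\noindent\emph{Proof plan.}\ The idea is to turn the statement into a scheduling question on a partial order and to use the non-\converging hypothesis only through one combinatorial fact about simple paths in the topology. Since $\rho$ is eager, each matched send/receive pair occurs as two adjacent actions, so I would regard every such pair as a single atomic \emph{block} and every other action (local action, stack action, or unmatched ``growing-phase'' send) as a singleton block. Let $\prec$ be the partial order on the blocks of $\rho$ generated by the local total order of each process, a rendezvous block being shared by the two processes it involves. One checks routinely that the eager runs order-equivalent to $\rho$ are exactly the linearizations of $\prec$: the ``matching pair'' clause in the definition of $\sim$ is absorbed into the block structure, and the \fifo order of each channel is automatically respected because all blocks touching a channel involve its source and are hence totally ordered by $\prec$. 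Since well-formedness is preserved by $\sim$, every such linearization is already well-formed, so the task reduces to exhibiting a linearization of $\prec$ that satisfies the two conditions defining well-bracketing.

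The core is the following claim, the only place non-\convergence is used. Let $b$ be a push of a process $p$ that is $\prec$-minimal among all pushes (hence the first stack action of every linearization), and let $b'$ be its matching pop of $p$. Let $(d,d')$ be a matched push/pop pair of some process $q\neq p$ that, in some linearization $\rho'$, occurs in the order $b,\dots,d,\dots,b',\dots,d'$; this is precisely the situation in which one endpoint of $(d,d')$ lies strictly inside the window between $b$ and $b'$ and the other lies outside. Then $d\not\prec b'$. Indeed, assume $d\prec b'$ and take a causal chain from $d$ to $b'$; collapsing repeated processes makes its sequence of processes simple and so turns it into a \emph{simple} undirected path $q\uchannel{}{c_1}{}\cdots\uchannel{}{c_m}{}p$ in the topology. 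Its first rendezvous, on $c_1$, is a block of $q$ occurring after $d$ and, being $\prec$-below $b'$ and hence before $b'$ in $\rho'$ (so also before $d'$), before $d'$; thus $q$'s stack is non-empty there, and since a rendezvous on a channel forces the \emph{restricted} endpoint to have an empty stack, $q$ must be unrestricted on $c_1$. Symmetrically the last rendezvous of the chain, on $c_m$, is a block of $p$ occurring strictly between $b$ and $b'$, where $p$'s stack is non-empty, so $p$ is unrestricted on $c_m$. Hence $c_1,\dots,c_m$ is a simple undirected path with both extreme ports unrestricted, contradicting the non-\convergence of the typed topology (Definition~\ref{def:conv}).

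Granting the claim, I would prove the proposition by induction on $|\rho|$. Fix $b,b'$ as above. Starting from an arbitrary linearization, as long as some matched pair $(d,d')$ has one endpoint inside the window between $b$ and $b'$ and one outside, the claim gives $d\not\prec b'$, so $d$ — together with the finitely many blocks it forces to follow it — can be moved past $b'$; this strictly decreases the number of blocks lying strictly between $b$ and $b'$, so after finitely many steps one obtains $\rho\sim\alpha\cdot b\cdot\mu\cdot b'\cdot\sigma$ in which $\alpha$ has no stack action and no push/pop pair straddles $b$ or $b'$. Then $\alpha$, $\mu$ and $\sigma$ are each strictly shorter, each well-formed for every process, and each again eager (the splitting is done at block boundaries, so no receive is orphaned), so by induction each is $\sim$ to an eager well-bracketed run $\alpha^\star,\mu^\star,\sigma^\star$, and I would check that $\alpha^\star\cdot b\cdot\mu^\star\cdot b'\cdot\sigma^\star$ works. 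Condition~(1) of well-bracketing is a concatenation of Dyck words; condition~(2) reduces to the single new case of the outer pair $(b,b')$ of $p$ together with a $p$-pair nested inside $\mu^\star$, and to settle it one has to build $\mu^\star$ with the extra property that $p$'s own stack operations inside $\mu$ take place only while every other process is at an empty stack — again possible because of non-\convergence, since whenever $p$ recurses and takes part in a rendezvous its partner is restricted on that channel and hence at an empty stack.

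The step I expect to be the main obstacle is exactly this last point: the induction hypothesis has to be strengthened so that the recursive constructions glue into a single schedule meeting \emph{both} bracketing conditions, which amounts to guaranteeing that no process' frame is ever sandwiched between two frames of another process — the content of condition~(2). The simple-path argument of the claim is the conceptual heart, and it is where the two hypotheses are genuinely used: ``eager'' so that cross-process dependencies are carried only by adjacent rendezvous blocks, and ``well-formed'' so that each rendezvous of a causal chain can be located strictly between a matched push and its pop and the corresponding port thereby read off as unrestricted. Turning this purely local statement into a global well-bracketed schedule, with the correct measure and the correct frame chosen for peeling at each stage, is the delicate part.
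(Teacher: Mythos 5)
Your trace-theoretic setup (rendezvous pairs as atomic blocks, eager runs as linearizations of the block order $\prec$) and your straddling-pair claim are reasonable, and the contradiction via a simple undirected path with both extreme ports unrestricted is indeed the same use of non-\convergence as in the paper's proof. But the argument as written has a genuine gap, and it is exactly the one you flag yourself: condition~(2) of well-bracketing is not secured by your induction. After peeling the outer frame $(b,b')$ of $p$ and recursively well-bracketing $\mu$, nothing prevents $\mu^\star$ from containing a frame of some $q\neq p$ that strictly encloses a nested frame of $p$; this is perfectly well-bracketed \emph{inside} $\mu^\star$, yet combined with the outer pair $(b,b')$ it violates condition~(2), because the segment from $b$ to the nested push of $p$ then contains a push of $q$ with no matching pop. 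The one-line justification you give --- that $p$'s rendezvous partners are restricted and hence at empty stack --- constrains only the processes $p$ synchronizes with directly, and only at the instants of those rendezvous; it says nothing about a process $q$ that interacts with $p$ only through intermediaries and may hold a non-empty stack across one of $p$'s nested frames. What is actually needed is the stronger statement that, inside $p$'s outer frame, the run can be scheduled so that every maximal block of other processes' moves is well-formed (all its movers start and end with empty stack). The paper obtains precisely this by first rescheduling $p$'s moves as early as possible (keeping matched send/receive pairs adjacent) and then running a chain argument over the \emph{last} communication actions of the processes moving in each interleaved segment $\chi_i$: each such process sits at the end of a simple undirected path leading to the channel of the next rendezvous with $p$, on which $p$ is unrestricted, so non-\convergence forces every one of them to be restricted on its last channel and hence empty-stacked at the end of $\chi_i$, and well-formedness of the whole run gives emptiness at the start too. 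That claim is the strengthening of the induction hypothesis that makes the gluing work, and it is missing from your proposal; your straddling-pair removal alone does not yield it.

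Two smaller inaccuracies to fix along the way: a $\prec$-minimal push need not be the first stack action of every linearization (pushes incomparable to $b$ can be scheduled before it), so neither the claimed shape $b,\dots,d,\dots,b',\dots,d'$ of a straddling pair nor the assertion that $\alpha$ contains no stack action comes for free --- the symmetric case $d,\dots,b,\dots,d',\dots,b'$ must be handled, or excluded by arranging, as the paper does, that the frame being peeled starts at the very first action of the run.
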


\begin{proof}
  By induction on the length of runs.
  The basis is trivial.
  Consider a run $\rho$, of non-zero length, that is both eager and well-formed.
  We assume that $\rho$ starts with a push action (otherwise, the existence of an order-equivalent run that is both eager and well-bracketed immediately follows by induction).
  Let $a=\push(\gamma)$ denote the first action of $\rho$, and let $p$ denote the process with $a \in A^p$.
  Let $\rho'$ denote an order-equivalent eager run obtained from $\rho$ by scheduling the actions of $p$ as early as possible, while maintaining adjacent send/receive pairs.
  It is readily seen that $\rho'$ may be written as:
  $$
  \rho' \ = \ x \longact{\push(\gamma)} x' \cdot \pi_0 \cdot \chi_1 \cdot \sigma_1 \cdot \pi_1 \cdots \chi_n \cdot \sigma_n \cdot \pi_n \cdot y \longact{\pop(\gamma)} y' \cdot \mu
  $$
  where the runs $\pi_i$, $\chi_i$ and $\sigma_i$ satisfy the
  following conditions:
  \begin{enumerate}[(a)]
  \item\label{enum:decomp-only-p}
    $\pi_i$ consists of moves of process $p$ which are either
    local actions or
    sends that are unmatched in $\rho$,
  \item\label{enum:decomp-only-not-p}
   $\chi_i$ contains no move of process $p$,
  \item\label{enum:decomp-sync}
   $\sigma_i$ is a rendezvous involving $p$,
  \item\label{enum:decomp-wf}
 the transitions $x \longact{\push(\gamma)}
  x'$ and $y \longact{\pop(\gamma)} y'$ are matching stack actions (of
  process $p$), 
  \item\label{enum:schedule-p-first}
    for each $1 \leq i \leq n$, the run $\chi_i \cdot \sigma_i$ is not
    order-equivalent to a run of the form $\chi'_i \cdot \sigma'_i
    \cdot \chi''_i$ where $|\chi'_i| < |\chi_i|$ and $\sigma'_i$ is a
    rendezvous involving $p$.
  \end{enumerate}
  The scheduling of $p$'s actions as early as possible is expressed by
  condition~(\ref{enum:schedule-p-first}) (notice that $\s_i$ and
  $\s'_i$ correspond to the same send/receive pair). 
  
  \noindent
  We first show the following claim.

  \begin{claim}
    For each $1 \leq i \leq n$, all processes that move in $\chi_i$
    have an empty stack at the start and end of $\chi_i$. 
  \end{claim}

  To prove the claim, let us denote by $P_i = \{q_1, \ldots, q_k\}$
  the set of processes that move in $\chi_i$, ordered by their last
  occurrence in $\chi_i$. 
  Since the last action in $\chi_i$ is performed by $q_k$, 
  we derive from~(\ref{enum:schedule-p-first}) that the rendezvous
  $\sigma_i$ is on a channel between $p$ and $q_k$. 
  Now let $1 \leq h < k$.  It follows
  from~(\ref{enum:schedule-p-first}) that the last action of $q_h$ in
  $\chi_i$ is a communication action $b_h$.  We have two cases to
  consider: 
  \begin{iteMize}{$\bullet$}
  \item $b_h$ is a send action: If there was no matching receive in $\rho'$, then this send action could be scheduled after $\sigma_i$, contradicting~(\ref{enum:schedule-p-first}).
    Hence, $\rho'$ contains a matching receive, which, by eagerness, is the next action in $\rho'$.
    This matching receive is performed by a process $q_g$ with $h < g$.
  \item $b_h$ is a receive action: Since $\rho'$ is eager, the matching send is the previous action in $\rho'$.
    This matching send is performed by a process $q_g$.
    Moreover, we must have $h < g$ since, otherwise, this matched send/receive pair could be scheduled after $\sigma_i$, contradicting~(\ref{enum:schedule-p-first}).
  \end{iteMize}
  We obtain that, for every $1 \leq h < k$, the last action of $q_h$
  in $\chi_i$ is a communication action over a channel $c_h$
  satisfying $q_h\uchannel{}{c_h}{}q_g$ for some $h < g \le k$.
  Let $c_k$ denote the channel of the rendezvous $\sigma_i$, and recall that $q_k\uchannel{}{c_k}{}p$.
  Observe that $p$ is unrestricted on $c_k$ since, according to~(\ref{enum:decomp-wf}), the stack of $p$ is non-empty in $\sigma_i$.
  As the typed topology of $\Rqcp$ is non-\converging, we derive that
  $q_h$ is restricted on $c_h$ for each $1 \leq h \leq k$, since there
  is a simple undirected path $q_h \uchannel{}{c_h}{} \cdots \uchannel{}{}{} q_k
  \uchannel{}{c_k}{} p$ for each $h$. 
It follows that $q_h$ has an empty stack at the end of $\chi_i$.

  We have thus shown that, for each $1 \leq i \leq n$, all processes
  that move in $\chi_i$ have an empty stack at the end of $\chi_i$.
  Now, recall that $\rho'$ is well-formed since it is order-equivalent
  to $\rho$.  Therefore, all processes that move in $\chi_i$ also have
  an empty stack at the start of $\chi_i$, which concludes the proof
  of the claim.

  It follows from the claim that each run $\chi_i$ is well-formed, so
  $\mu$ is also well-formed.  Since the runs
  $\chi_i$ and $\mu$ are eager, we derive from the induction
  hypothesis that each $\chi_i$ is order-equivalent to a run $\chi'_i$
  that is both eager and well-bracketed, and, similarly, $\mu$ is
  order-equivalent to a run $\mu'$ that is both eager and
  well-bracketed.  Replacing in $\rho'$ each $\chi_i$ by $\chi'_i$ and
  $\mu$ by $\mu'$,  yields a run $\rho'' \sim
  \rho$ that is both eager and well-bracketed (the second condition
  for well-bracketed runs is satisfied since the runs $\chi_i$ contain
  no move of $p$).  This concludes the
  proof of the proposition.
\end{proof}

Well-bracketed runs in an (arbitrary) \rqcp cannot exploit the full
power of the multiple stacks.  Indeed, the well-bracketing property
ensures that the individual process stacks do not ``interact'' with
each other: a single, global stack is sufficient to simulate the run.
More precisely, given an \rqcp $\Rqcp =
\tuple{\Topo,\tau,M,(\Pdp^p)_{p\in P}}$, with
$\Pdp^p=(Z^p,z^p_\init,A^p,\Gamma^p,\Delta^p)$ for each $p \in P$, we
construct a product pushdown system $\Pdp^\otimes$ that simulates the
well-bracketed eager runs of $\Rqcp$.  Its set of control states is
$Z^\otimes = P \times (\prod_{p\in P} Z^p) \times \pow{P} \times
\pow{C}$.  A control state $(p, \vec{z}, E, G) \in Z^\otimes$ means
that $p$ is the active process, $\vec{z}$ is the current global
control state, $E$ is the set of processes that have an empty 
stack, and $G$ is the set of channels that are ``growing'', i.e., for
which no receive action is possible anymore.  The stack alphabet of
$\Pdp^\otimes$ is the disjoint union $\Gamma^\otimes = \bigcup_{p \in P} \Gamma^p$. The
stack of $\Pdp^\otimes$ will be the concatenation of $|P|$
words $u^p \in(\G^p)^*$, one for each process $p$, where $u^p$ is empty if
and only if $p \in E$.

Let us explain how the simulation of eager, well-bracketed runs works.
First, an active process $r$ is non-deterministically chosen, leading
to the control state $(r, (z^p_\init)_{p \in P}, P, \emptyset)$.
Then, $\Pdp^\otimes$ simulates the behavior of $r$ as expected, using
its stack as $r$ would do, but also updates the set $E$ accordingly. 
To simulate send actions $c!m$, $\Pdp^\otimes$ non-deterministically
decides whether $c!m$ is actually part of a rendezvous  on $c$
(provided that $c \not\in G$), or will never be matched. 
In the former case, $\Pdp^\otimes$ simulates (in a single step) the
rendezvous $c!m \cdot c?m$.
In the latter case, the channel $c$ is added to the set $G$ of
``growing'' channels. 
Moreover, in both cases, the communication is performed only if the
typed topology allows it, which can be checked using the set $E$. 

The pushdown system $\Pdp^\otimes$ may choose
non-deterministically, at any
time, to switch the active process to
some process $q$.  
Since the run simulated by $\Pdp^\otimes$ is well-bracketed, either
$q$'s stack is empty ($q \in E$) or 
the top stack symbol must belong to $\G^q$. Thus,  $\Pdp^\otimes$
performs this check and then sets  the active process to $q$.

By construction, the pushdown system $\Pdp^\otimes$ simulates all runs
of $\Rqcp$ that are both eager and well-bracketed, and only those
runs. Moreover, the size of $\Pdp^\otimes$ is bounded by
$|\Rqcp|^{\mathcal{O}(|P|\cdot |C|)}$. Since every \rqcp can be easily
modified in order to reach a given state with all stacks empty we
obtain:

\begin{proposition}\label{prop:confl-reduction}
  State eager-reachability of an \rqcp of size $n$ with
  non-\converging typed topology $\tuple{\Topo=(P,C),\tau}$ 
  reduces in \dexptime to state reachability for a pushdown system of
  size $n^{\mathcal{O}(|P|\cdot |C|)}$.
\end{proposition}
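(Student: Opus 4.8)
The plan is to combine the two preceding propositions with a routine product construction and the known complexity of pushdown reachability. Concretely, I would prove Proposition~\ref{prop:confl-reduction} as follows.

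\medskip

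\noindent\textbf{Step 1: reduce to well-formed runs.} First I would observe that state eager-reachability can be reduced to the same problem where the target run is required to be \emph{well-formed} for every process. Given an \rqcp $\Rqcp$ and a target global control state $\vec z$, I modify each pushdown system $\Pdp^p$ by adding a fresh control state $\hat z^p$ reachable from $z^p$ only via a loop that pops arbitrary stack symbols until the stack is empty (i.e.\ $\hat z^p$ is reachable from $z^p$ with empty stack). Asking to reach $(\hat z^p)_{p\in P}$ in the modified \rqcp is equivalent to reaching $\vec z$ in $\Rqcp$, and any witnessing eager run now starts and ends with all stacks empty, hence is well-formed. This costs only a polynomial blow-up, and the typed topology is unchanged, so it stays non-\converging.

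\medskip

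\noindent\textbf{Step 2: replace the witness by a well-bracketed run.} By Proposition~\ref{prop:wf-to-wb-for-confluent}, since the typed topology is non-\converging, any eager well-formed run of the modified \rqcp is order-equivalent to an eager \emph{well-bracketed} run; by the Fact on order-equivalence, this run has the same endpoints, in particular the same final global control state. Hence state eager-reachability (with the all-empty-stack target) holds iff it is witnessed by an eager well-bracketed run.

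\medskip

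\noindent\textbf{Step 3: the product pushdown system.} I then describe the product pushdown system $\Pdp^\otimes$ sketched in the text preceding the proposition, with control states $Z^\otimes = P \times (\prod_{p\in P} Z^p) \times \pow P \times \pow C$ and stack alphabet $\Gamma^\otimes = \bigcup_{p\in P}\Gamma^p$, whose single stack holds the concatenation $u^{p_1}\cdots u^{p_{|P|}}$ of the individual process stacks in the fixed bracketing order, and verify two things. (i) \emph{Correctness:} a run of $\Rqcp$ is eager and well-bracketed iff it is simulated by $\Pdp^\otimes$; here the $E$-component tracks which process stacks are empty (so that a context switch to $q$ can legally manipulate only the top block, which belongs to $\Gamma^q$ when $q\notin E$), the $G$-component enforces that once a channel enters its growing phase no further receive occurs, and the set $E$ is used to check, at each communication action, whether the typed topology permits it (a restricted endpoint must be in $E$); rendezvous $c!m\cdot c?m$ with $c\notin G$ are simulated in a single transition. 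Well-bracketing is exactly what makes a single global stack faithful: a pop of $q$ always matches the most recent unmatched push of $q$, which after the bracketing reordering sits on top of the global stack. (ii) \emph{Size:} $|\Pdp^\otimes| = |Z^\otimes|^2\cdot|A^\otimes|$ with $|Z^\otimes| \le |P|\cdot\prod_p|Z^p|\cdot 2^{|P|}\cdot 2^{|C|}$ and $|A^\otimes|$ polynomial in $|\Rqcp|$ plus the rendezvous actions, which is bounded by $|\Rqcp|^{\mathcal O(|P|\cdot|C|)}$; the set of transitions can be computed in time polynomial in this bound, i.e.\ in \dexptime.

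\medskip

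\noindent\textbf{Main obstacle.} The only genuinely delicate point is Step~3(i), the faithfulness of the single-stack simulation: one must argue that for a well-bracketed run the interleaving of the blocks $u^p$ on the global stack is always a proper suffix-nesting, so that whenever the active process is $q$ the top block of the global stack is precisely $q$'s stack — this is where condition~(2) of well-bracketing (nested push/pop across processes, with the intervening sub-runs well-formed for the other processes) is essential, and it is the step where I would be most careful in writing the formal invariant. Everything else — Steps 1 and 2, and the size/time bookkeeping in Step~3(ii) — is routine, and the final sentence of the proof just records that pushdown state reachability is decidable (indeed in polynomial time in the pushdown's size), which is what ultimately yields the \dexptime bound quoted in Theorem~\ref{thm:conv}.
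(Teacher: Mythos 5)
Your proposal is correct and follows essentially the same route as the paper: reduce to runs ending with all stacks empty (hence well-formed), invoke Proposition~\ref{prop:wf-to-wb-for-confluent} to replace the witness by an eager well-bracketed run, and simulate such runs with the single-stack product system $\Pdp^\otimes$ (with the $E$ and $G$ components and the $|\Rqcp|^{\mathcal{O}(|P|\cdot|C|)}$ size bound) exactly as the paper does. The point you flag as delicate, that well-bracketing makes the top block of the global stack belong to the active process, is indeed the heart of the paper's construction as well.
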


Since the state reachability problem for pushdown systems is decidable in
deterministic polynomial time, we obtain the upper bound:

\begin{proposition}\label{prop:eager-upperb}
  The state eager-reachability problem for \rqcp over a
  non-\converging typed topology  is in \dexptime.
\end{proposition}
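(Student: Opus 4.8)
The plan is to derive Proposition~\ref{prop:eager-upperb} from the reduction of Proposition~\ref{prop:confl-reduction} together with the classical result that state reachability for a single pushdown system is decidable in deterministic polynomial time (e.g.\ by the saturation algorithm of Bouajjani, Esparza and Maler). Most of the combinatorial content is already in place, so what remains is to chain the earlier results into one decision procedure and to carry out the complexity bookkeeping carefully.

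Concretely, I would spell out the following chain of equivalences for a given \rqcp $\Rqcp$ of size $n$ over a non-\converging typed topology and a target global control state $\vec{z}$. First, by the standard gadget alluded to before Proposition~\ref{prop:confl-reduction} (add to each $\Pdp^p$, reachable from $z^p$, a $\pop$-loop that empties $p$'s stack followed by exchanging a fresh end marker on each incident channel), reaching $\vec{z}$ is equivalent to reaching the configuration with all stacks and all channels empty in the modified \rqcp; this modification adds no channel, hence preserves non-\convergence, and costs only a polynomial increase in size. Second, any run witnessing this new target starts and ends with empty stacks, hence is well-formed, so the target is eager-reachable iff it is reachable by an eager, well-formed run. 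Third, by Proposition~\ref{prop:wf-to-wb-for-confluent} and the fact that order-equivalent runs share source and target configuration, this holds iff the target is reachable by an eager, \emph{well-bracketed} run. Fourth, by the product construction $\Pdp^\otimes$ underlying Proposition~\ref{prop:confl-reduction}, which simulates exactly the eager, well-bracketed runs and has size $n^{\mathcal{O}(|P|\cdot|C|)}$, the question becomes a state reachability instance in the single pushdown system $\Pdp^\otimes$.

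For the complexity, since $|P|,|C|\le n$, the size of $\Pdp^\otimes$ is $n^{\mathcal{O}(|P|\cdot|C|)} = 2^{\mathcal{O}(n^2\log n)}$, i.e.\ singly exponential in $n$. Building $\Pdp^\otimes$ takes time polynomial in its size, and deciding state reachability in it takes time polynomial in its size as well; composing these, the whole procedure runs in deterministic exponential time in $n$, which gives the claimed \dexptime\ bound.

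I do not expect a genuine obstacle at this point: the hard work is concentrated in Proposition~\ref{prop:wf-to-wb-for-confluent} and in the construction of $\Pdp^\otimes$. The two places that still need a little care are (i) checking that the ``empty everything at the end'' gadget truly leaves the typed topology non-\converging and only blows up the size polynomially, and (ii) the arithmetic confirming that $n^{\mathcal{O}(|P|\cdot|C|)}$ followed by a polynomial-time algorithm stays inside \dexptime\ and does not escape to \twodexptime.
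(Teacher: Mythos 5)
Your overall route is exactly the paper's: chain Proposition~\ref{prop:wf-to-wb-for-confluent} and the product construction $\Pdp^\otimes$ of Proposition~\ref{prop:confl-reduction} with the polynomial-time state reachability algorithm for pushdown systems, and check that $n^{\mathcal{O}(|P|\cdot|C|)}$ followed by a polynomial stays singly exponential. That part, including the arithmetic, is fine and matches the paper's (very short) proof.

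However, your preprocessing gadget is stated incorrectly, and as written the first equivalence in your chain fails. You require the modified system to reach a configuration with all stacks \emph{and all channels} empty, enforcing channel emptiness by an end-marker exchange on every incident channel. Under eager semantics this is not equivalent to reaching $\vec{z}$: the target set of the state eager-reachability problem is $\{\vec{z}\}\times(\prod_p(\Gamma^p)^*)\times(M^*)^C$, so channels may legitimately be non-empty, and an eager run may have to leave unmatched messages in a channel (a channel in its ``growing phase''). Once such an unmatched message sits in channel $c$, the end marker subsequently sent into $c$ can never be received eagerly (its receive would not immediately follow its matching send), so your modified target is eager-unreachable even though $\vec{z}$ is eager-reachable; the reduction would wrongly answer ``no''. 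The end-marker trick is sound in the paper's hardness constructions only because there every send is matched by design. The fix is simply to drop the channel-emptying part: add only the pop-loops so that the (fresh) target state is reached with all stacks empty — this is all that is needed, since a run starting and ending with empty stacks is well-formed, which is the only hypothesis Proposition~\ref{prop:wf-to-wb-for-confluent} and the simulation by $\Pdp^\otimes$ require; with that correction your argument coincides with the paper's.
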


\section{Eager \qcp and the Mutex Restriction}
\label{sect:mutex}

The previous section showed how to decide the state
eager-reachability problem provided that the topology behaves well
w.r.t.~pushdowns and communication. A
first natural question is whether one can decide if eager runs
suffice for solving the reachability problem. A second legitimate
question is whether the restriction to eager runs is realistic. We
answer to the first question negatively.  However, on the positive
side we show a restricted class of \qcp where eager runs suffice:
\qcp over cyclic topologies with the mutex restriction. We focus in
this section
on \qcp since the eager condition talks about communication only.

\begin{definition}
   A configuration $x$ of a \qcp $\Qcp$ is \emph{mutex} if for every simple
   undirected cycle $(p_0,c_1,p_1,\ldots,c_n,p_n=p_0)$ in the topology of
   $\Qcp$, at most one of the channels $c_i$ is non-empty in $x$.
   A run $\rho$ in $\Qcp$ is \emph{mutex} if each configuration in
   $\rho$ is mutex.
\end{definition}
A \qcp $\Qcp$ is called \emph{mutex} if every configuration reachable in
$\Qcp$ is mutex.
We show later in this section that the mutex property is decidable for
finite \qcp.
Notice also that every \qcp with polyforest topology is mutex.

\smallskip

Before discussing mutex we first comment on the results
of~\cite{latorre-s-2008-299-a} and explain their relation with
Theorem~\ref{thm:conv} and Corollary~\ref{c:mutex} below.
The latter paper shows that state reachability is
decidable for finite \qcp over polyforest topologies, and for
well-queueing \rqcp over directed forests.  The proof of the result
for \rqcp relies on the idea that, on tree topologies, one can
reorder runs such that the resulting run has a bounded
number of contexts, where in each context only
one process executes all its actions by reading on one unique
incoming channel from its tree parent (and---in the case of
\rqcp---solely when its local stack is empty). Hence, the
problem reduces to the control-state reachability for a bounded-phase
multi-stack pushdown system, a question which was proven to be decidable in
doubly exponential time~\cite{latorre-s-2007-161-a}. A simple channel
reversal argument allows us to reduce the question for finite \qcp over
polyforest topologies to directed forests.

We show in the following that mutex \qcp are eager.  This
allows us to apply the results of the previous section and to obtain
the decidability of state reachability (for both finite \qcp
over polyforest topologies and well-queueing \rqcp over directed
forests)  via a direct
proof. Moreover, recall that the complexity of the algorithm of the
previous section is \dexptime, so one exponential less than the
results obtained  in~\cite{latorre-s-2007-161-a} for polyforest architectures.

\begin{remark}
  Over a topology of two finite
  processes connected by two channels, mutex runs are referred to as
  ``half-duplex communication''.
  For these, it is known how to decide the
  reachability problem through an effective construction of the
  recognizable reachability set~\cite{cece-g-2005-166-a}.
  Quasi-stable  systems are a semantic ad-hoc extension of
  this idea to finite \qcp with larger, cyclic
  topologies~\cite{cece-g-1997-304-a}.
\end{remark}

\begin{proposition}
  \label{p:one-bound}
  Given a \qcp $\Qcp$, every mutex run starting with empty channels
  admits an order-equivalent eager run.
\end{proposition}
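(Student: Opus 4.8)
The plan is to prove this by induction on the length of the mutex run $\rho$, showing that $\rho$ is order-equivalent to an eager run that is still mutex and still starts with empty channels. The base case (length zero) is trivial. For the inductive step, consider a nonempty mutex run $\rho = \rho_1 \cdot (x_{n-1}, a_n, x_n)$. By induction, $\rho_1$ is order-equivalent to an eager mutex run $\rho_1'$, and since order-equivalence preserves the first and last configurations (the Fact in the excerpt), $\rho' = \rho_1' \cdot (x_{n-1}, a_n, x_n)$ is again a mutex run starting with empty channels. So it suffices to show: if $\rho'$ is a mutex run whose prefix obtained by deleting the last action is eager, then $\rho'$ is order-equivalent to an eager mutex run.

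The only way $\rho'$ can fail to be eager is that its last action $a_n$ is a receive action $c?m$ whose matching send $c!m = a_j$ occurs strictly earlier, with $j < n-1$ (if $j = n-1$ we are already done). First I would argue that between $a_j$ and $a_n$ there is no other communication action on $c$: a send on $c$ strictly between them would make $a_j$'s match a different receive by the matching-pair definition, and a receive on $c$ strictly between them would (since $a_j$ matches the receive at the current head of channel $c$, namely $a_n$) have to be matched by a send that precedes $a_j$ — but eagerness of the prefix $a_1 \cdots a_{n-1}$ would then force that receive to be immediately preceded by its send, and one can iterate this observation. The upshot I want is that the send $a_j$ sits at the head of channel $c$ throughout $a_{j+1} \cdots a_{n-1}$, and channel $c$ is nonempty along this whole stretch. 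Now I would commute the receive $a_n$ leftward past each of $a_{j+1}, \ldots, a_{n-1}$ in turn, until it lands immediately after $a_j$. Each such commutation is legal for order-equivalence provided the action it crosses is (i) on a process different from $\tgt(c)$ and (ii) does not form a matching send/receive pair with anything being moved across — condition (ii) is automatic since we move a single action, and we need to handle the actions on $\tgt(c)$ itself.

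The main obstacle, and the place where mutex is used, is exactly this: dealing with actions among $a_{j+1}, \ldots, a_{n-1}$ that lie on process $q = \tgt(c)$, which block the naive commutation. Here the mutex hypothesis enters decisively. Since channel $c$ is nonempty throughout $a_{j+1} \cdots a_{n-1}$, the mutex property says that in every configuration in this stretch, $c$ is the \emph{only} nonempty channel on any simple undirected cycle through $c$; in particular no other channel incident to $q$ that shares a cycle with $c$ can be nonempty there. I would use this to show that $q$ cannot perform a receive on any channel $c' \neq c$ during this stretch (such a channel would have to be empty or form no cycle with $c$ — and one argues the relevant channels are empty), so the only $q$-actions possible are local actions and sends. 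A local action of $q$ commutes with $a_n$ trivially as far as order-equivalence is concerned? No — it is on the same process, so it does \emph{not} commute. This is the genuinely delicate point: instead of commuting $a_n$ past $q$'s actions, I would first pull \emph{all} of $q$'s actions occurring in $a_{j+1} \cdots a_{n-1}$ rightward to just before $a_n$ (they commute rightward past every non-$q$ action, and past sends/receives on $q$ that are not their own match, which is fine since by the above those are at most sends by $q$ on growing channels whose matches come later or never). Then $a_n$ together with this block of $q$-actions forms a contiguous segment of moves of $q$ preceded by the send $a_j$ and a block of non-$q$ actions; within that contiguous $q$-segment I reorder so that the receive $a_n$ comes first, then push the resulting send/receive pair $a_j \cdot a_n$ leftward together as a rendezvous past the intervening non-$q$ actions. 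Finally I must check that the resulting run is still mutex — but any configuration of the new run equals some configuration reachable by a prefix order-equivalent to a prefix of $\rho'$, hence mutex, and it starts with empty channels; eagerness of the new last step is now immediate, so induction closes.
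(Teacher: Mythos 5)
There is a genuine gap, and it sits at the pivotal step of your inductive case. After pulling all of $q$'s actions from the stretch $a_{j+1}\cdots a_{n-1}$ to just before $a_n$, you propose to ``reorder within that contiguous $q$-segment so that the receive $a_n$ comes first''. Every action in that segment, including $a_n$, belongs to the same process $q=\tgt(c)$, and order-equivalence never permits commuting two actions of the same process (nor would the swapped sequence be a run of $q$ in general); so this step is illegal. Moreover, the situation it is meant to handle really occurs: your claim that mutex leaves $q$ only local actions and sends in the stretch covers only channels sharing a simple undirected cycle with $c$, whereas on a polyforest topology every \qcp is vacuously mutex. Concretely, with $p\xrightarrow{c}q$ and $r\xrightarrow{c'}q$, the run $c!m,\;c'!m',\;c'?m',\;c?m$ is mutex with an eager prefix, and $q$ receives on $c'$ inside the stretch; the correct eager reordering is $c'!m',\,c'?m',\,c!m,\,c?m$, obtained by moving the $c'$-rendezvous \emph{before} the send $a_j$, while your procedure gets stuck at the forbidden same-process swap. (Your preliminary claim that no further send on $c$ can occur in the stretch is also unjustified --- later \emph{unmatched} sends on $c$ are possible --- though that one is harmless.) Repairing this requires a genuinely different bookkeeping, which is what the paper does: it considers the last action $e_p$ of every process in the run, schedules a local action or an unmatched send last when possible, and otherwise uses a ``last peer'' pigeonhole argument together with mutex to exhibit two processes whose last actions form a matching send/receive pair, which it schedules last as a rendezvous; it never needs to commute actions of a single process.

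The second gap is the strengthened induction hypothesis. You must apply mutex to configurations of the reordered prefix $\rho_1'$, so you strengthen the statement to ``order-equivalent to an eager \emph{mutex} run'', but your closing justification (``any configuration of the new run equals some configuration reachable by a prefix order-equivalent to a prefix of $\rho'$, hence mutex'') does not hold: order-equivalence preserves only the first and last configurations, and an eager reordering of a mutex run need not be mutex. For instance, on a four-cycle with channels $c$ (from $p_1$ to $p_2$) and $d$ (from $p_3$ to $p_4$) lying on the same simple undirected cycle, the mutex eager run $d!3,\;d?3,\;c!1$ (with $c!1$ unmatched) is order-equivalent to the eager run $c!1,\;d!3,\;d?3$, whose middle configuration has both $c$ and $d$ nonempty. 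So even if the strengthened statement were true, your argument does not establish it, and without it the induction does not close. The paper avoids this issue altogether: mutex is invoked only on configurations of the \emph{original} run, and the shorter run handed to the induction hypothesis (the run $\chi\cdot\mu$ obtained after scheduling the chosen rendezvous last) is proved mutex directly, by comparing its channel contents with those of the corresponding configurations of $\chi'$ and $\chi''$.
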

\begin{proof}
  By induction on the length of runs.
  The basis is trivial.
  Consider a mutex run $\rho$ of non-zero length, that starts with
  empty channels. In particular, each receive action in
  $\rho$ has a matching send in $\rho$.
  We write $P_\rho \subseteq P$ for the (non-empty) set of all processes
  $p$ that move in $\rho$.
  For each $p \in P_\rho$, let $e_p$ denote the last action of $p$ in
  $\rho$.
  If some $e_p$ is a local action, or a send action that is not matched in
  $\rho$, we may schedule it last, which preserves the run's mutex property,
  and derive the existence of an eager run $\rho' \sim \rho$ by induction.
  Otherwise, for each $p \in P_\rho$, the action $e_p$ is a communication
  action that is matched in $\rho$, and we let $c_p$ denote the channel of
  $e_p$.
  Note that each $c_p$, for $p \in P_\rho$, is a channel between $p$ and
  another process in $P_\rho$, which we call its last peer.
  We may build an infinite sequence of processes in $P_\rho$ by picking an
  arbitrary process in $P_\rho$ and iteratively moving to its last peer.
  By the pigeonhole principle, there exist $p_0, \ldots, p_n$ in $P_\rho$,
  with $n > 0$, such that $(p_0, c_{p_0}, \ldots, p_n, c_{p_n}, p_0)$ is
  an undirected path in $\Topo$ and $p_0, \ldots, p_n$ are distinct.
  Moreover, we may assume w.l.o.g. that $p_0$ is the process that moves
  last in $\rho$ among $\{p_0, \ldots, p_n\}$.
  To simplify notation, let us simply write $e_i$ in place of $e_{p_i}$,
  and $c_i$ in place of $c_{p_i}$.
  Remark that the undirected path $(p_0, c_0, \ldots, p_n, c_n, p_0)$ must
  be a simple undirected cycle if $c_0 \neq c_1$.

  Let us show that $e_1, e_0$ is a pair of matching send/receive actions.
  Since $p_0\uchannel{}{c_0}{}p_1$ and $p_1$ stops moving before $p_0$ in
  $\rho$, the communication action $e_0$, which is matched in $\rho$,
  must be a receive action $e_0 = c_0 ? m_0$.
  We obtain that $\rho$ is of the form:
  $$
  \rho \ = \ \chi \cdot x' \longact{e_1} y' \cdot \chi' \cdot
  x'' \longact{c_0 ? m_0} y'' \cdot \chi''
  $$
  with no move of $p_1$ in $\chi'$, and no move of $p_0, p_1$ in $\chi''$.
  It follows that $c_0$ is non-empty in $y'$.
  Since $\rho$ is a mutex run, $x'$ and $y'$ are mutex configurations.
  If $c_0 \neq c_1$, then $c_0$ is also non-empty in $x'$, hence $c_1$
  must be empty in both $x'$ and $y'$, which is impossible since $e_1$
  is communication action on $c_1$.
  Therefore, we get that $c_0 = c_1$, and, hence, $e_1$ is the last send
  action on $c_0$ in $\rho$.
  Since $e_1$ is matched in $\rho$, it follows that $e_1$ is the matching
  send of $e_0$, which implies that $e_1 = c_0 ! m_0$.

  We may now conclude the proof of the proposition.
  Recall that $e_1, e_0$ are the last actions of $p_1$ and $p_0$ in
  $\rho$, respectively.
  Since $e_1 = c_0 ! m_0$ and $e_0 = c_0 ? m_0$ are matched,
  we may schedule $e_1, e_0$ last.
  This leads to a run $\rho'$ that is order-equivalent to $\rho$, and of
  the form:
  $$
  \rho' \ = \ \chi \cdot \mu \cdot x_0 \longact{c_0 ! m_0} x_1 \cdot
  \longact{c_0 ? m_0} \cdot x_2
  $$
  where the trace
  of $\mu$ satisfies
  $\trace(\mu) = \trace(\chi') \cdot \trace (\chi'')$.
  It follows from the previous trace equality that, for each configuration
  $(\vec{s}, \vec{w})$ occurring in $\mu$, there exists
  \begin{iteMize}{$\bullet$}
  \item
    either a configuration $(\vec{s'}, \vec{w'})$ in $\chi''$ with
    $\vec{w} = \vec{w'}$,
  \item
    or a configuration $(\vec{s'}, \vec{w'})$ in $\chi'$ such that
    $w'^{c_0} = w^{c_0} \cdot m_0$ and $w'^c = w^c$ for all $c \neq c_0$.
  \end{iteMize}
  In both cases, we derive that $(\vec{s}, \vec{w})$ is mutex since
  $(\vec{s'}, \vec{w'})$ is mutex.
  Therefore, the run $\mu$ is mutex.
  Moreover, the run $\chi$ is also mutex since it is a prefix of the mutex
  run $\rho$.
  We derive from the induction hypothesis that $\chi \cdot \mu$ is
  order-equivalent to an eager run $\mu'$.
  Replacing $\chi \cdot \mu$ by $\mu'$ in $\rho'$ yields a run
  $\rho'' \sim \rho$ that is eager.
  This concludes the proof of the proposition.
\end{proof}

\begin{corollary} \label{c:mutex}
  Every mutex \qcp is eager.
\end{corollary}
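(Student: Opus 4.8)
The plan is to derive the corollary almost immediately from Proposition~\ref{p:one-bound}. Recall that a \qcp $\Qcp$ is \emph{mutex} if every reachable configuration is mutex, and $\Qcp$ is \emph{eager} if $\Reach_\eager(\Qcp) = \Reach(\Qcp)$. Since $\Reach_\eager(\Qcp) \subseteq \Reach(\Qcp)$ holds trivially for any \qcp, the only thing to prove is the reverse inclusion: every reachable configuration is eager-reachable.

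So, first I would take an arbitrary configuration $x \in \Reach(\Qcp)$ and fix a run $\rho$ from the initial configuration $x_\init$ to $x$. The key observation is that the initial configuration has all channels empty (by definition of $x_\init$), so $\rho$ is a run starting with empty channels. Next I would argue that $\rho$ is a mutex run: every configuration occurring along $\rho$ is reachable in $\Qcp$, hence mutex by the assumption that $\Qcp$ is mutex; thus every configuration in $\rho$ is mutex, which is exactly the definition of a mutex run. Therefore $\rho$ satisfies both hypotheses of Proposition~\ref{p:one-bound} — it is mutex and starts with empty channels — so there exists an order-equivalent eager run $\rho'$.

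Finally, by the Fact stated earlier in the excerpt, order-equivalent runs start in the same configuration and end in the same configuration; hence $\rho'$ is an eager run from $x_\init$ to $x$, which witnesses $x \in \Reach_\eager(\Qcp)$. Since $x$ was arbitrary, $\Reach(\Qcp) \subseteq \Reach_\eager(\Qcp)$, and combined with the trivial inclusion this gives $\Reach_\eager(\Qcp) = \Reach(\Qcp)$, i.e., $\Qcp$ is eager. There is no real obstacle here: all the work has been done in Proposition~\ref{p:one-bound}, and the corollary is just the packaging of that proposition into the language of reachability sets, using the elementary facts that $x_\init$ has empty channels, that prefixes/subconfigurations of reachable configurations are reachable, and that order-equivalence preserves endpoints.
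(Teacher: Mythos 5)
Your argument is correct and is exactly the intended derivation: the paper states Corollary~\ref{c:mutex} as an immediate consequence of Proposition~\ref{p:one-bound}, and your packaging (any run from $x_\init$ starts with empty channels, is mutex because all its configurations are reachable, and order-equivalence preserves endpoints) is precisely the routine argument the paper leaves implicit.
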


\begin{remark}
  A closer look at the proof of Proposition~\ref{p:one-bound} shows
  that the result still
  holds for the following weaker variant of the mutex property:
  a configuration $x$ of a \qcp $\Qcp$ is \emph{weakly mutex} if for
  every simple undirected cycle $(p_0,c_1,p_1,\ldots,c_n,p_n)$ in the
  topology of $\Qcp$, at most one of the channels $c_1, c_2$ is
  non-empty in $x$.
\end{remark}

We derive the following result as an immediate consequence of
Corollary~\ref{c:mutex}. The upper bound is obtained as an on-the-fly
simulation: since we simulate eager runs we do not have to store any
message, but  keep track of growing
channels. The lower bound follows from
the non-emptiness test of the intersection of several regular languages.

\begin{proposition} \label{thm:mutextoeager}
  The state reachability problem for finite, mutex \qcp
  is \pspace-complete.
\end{proposition}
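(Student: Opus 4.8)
The plan is to establish both the upper and the lower bound separately. For the \pspace upper bound, I would exploit Corollary~\ref{c:mutex}: a mutex finite \qcp $\Qcp$ is eager, so $\Reach(\Qcp) = \Reach_\eager(\Qcp)$, and it therefore suffices to decide state eager-reachability. An eager run can be simulated on-the-fly without storing channel contents: each matched send/receive pair acts as a rendezvous (consuming no memory), while each unmatched send merely marks its channel as ``growing'' and is never read again. So I would describe a nondeterministic machine that stores only (i) the current global control state $\vec{s} \in \vec{S}$ (polynomial, since $\Qcp$ is finite), and (ii) the set $G \subseteq C$ of growing channels. At each step it nondeterministically picks a process $p$ and an enabled action of $p$: a local action updates $s^p$; a send $c!m$ with $c \notin G$ either is immediately followed by the matching receive $c?m$ on the destination process (a single combined step, updating both components) or causes $c$ to be added to $G$; a send into a channel already in $G$ simply keeps $c$ in $G$; and a receive action is only ever performed as the second half of such a rendezvous. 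The machine accepts when it reaches the target state $\vec{s}$. Correctness is exactly that this machine generates precisely the traces of eager runs of $\Qcp$. Since the configuration of the simulator has polynomial size, the problem is in \nspace(\poly) = \pspace by Savitch. I should also remark that checking the mutex hypothesis is not needed here — the problem statement takes ``mutex \qcp'' as a promise — but if one wants an algorithm that also verifies the promise, that is handled separately (the paper states mutex is decidable for finite \qcp, which presumably follows from the eager simulation plus a cycle-by-cycle monitor on $G$).

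For the \pspace lower bound, I would reduce from the \pspace-complete problem of deciding non-emptiness of the intersection $\bigcap_{i=1}^n L_i$ of $n$ regular languages (given as DFAs), which is a classical result (Kozen). The idea, paralleling Proposition~\ref{prop:non-confluent-lb} but with finite processes only, is to use the star topology $P = \{p, q_1, \dots, q_n\}$, $C = \{c_1, \dots, c_n\}$ with $p \dchannel{}{c_i}{} q_i$. Process $p$ nondeterministically guesses a word over the common alphabet letter by letter, broadcasting each guessed letter to every $q_i$ over $c_i$; each $q_i$ runs the DFA for $L_i$ on the received letters; at the end $p$ sends a termination marker $\$$ on each channel so that $q_i$ can check it is in an accepting state and that all channels are emptied. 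The target global state has $p$ in its final state and each $q_i$ accepting. Crucially I must argue that this \qcp is mutex: the only simple undirected cycles would require at least two channels between the same pair of processes, but the star topology is a polyforest (it contains no simple undirected cycle at all), so the mutex condition is vacuously satisfied — hence the instance is a legitimate input to the mutex-\qcp reachability problem. Then $\bigcap_i L_i \neq \emptyset$ iff the target state is reachable. Since the reduction is clearly logspace-computable, \pspace-hardness follows.

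The main obstacle I anticipate is getting the upper bound bookkeeping exactly right, in particular handling the interaction between unmatched sends and the eagerness requirement: once a channel $c$ is growing, no receive on $c$ is ever legal again, and the simulator must enforce this, while still permitting arbitrarily many further (ignored) sends into $c$. This is precisely what the set $G$ records, and Definition~\ref{def:bounded_run} guarantees that along an eager run every receive is immediately preceded by its matching send, so the ``combined rendezvous step'' faithfully captures all legal receives; the only subtlety is to check that channels not in $G$ are genuinely empty at the moment of a rendezvous, which is automatic because every prior send on such a channel was itself immediately matched. Once this invariant is stated cleanly, both directions of the simulation's correctness are routine inductions on run length, mirroring the order-equivalence arguments already used in the proofs of Proposition~\ref{p:one-bound} and Proposition~\ref{prop:wf-to-wb-for-confluent}.
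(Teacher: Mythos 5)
Your proposal is correct and follows essentially the same route as the paper, which likewise derives the upper bound from Corollary~\ref{c:mutex} via an on-the-fly simulation of eager runs that stores only the global control state and the set of ``growing'' channels, and obtains \pspace-hardness by reduction from non-emptiness of the intersection of several regular languages over a polyforest (hence trivially mutex) topology. Your write-up merely fills in details (the rendezvous/growing-channel invariant and the explicit star topology) that the paper leaves as a sketch.
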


\begin{remark}
  State reachability remains decidable for particular
  \emph{infinite-state} mutex \qcp. For example, if each local
  LTS  is a Petri net (i.e., the \qcp in question
  is a \fifo net~\cite{finkel-a-1987-106-a}), then the state
  reachability problem reduces to the Petri net reachability
  problem, which is known to be
  decidable~\cite{mayr-e-1984-441-a,kosaraju-s-1982-267-a}.
  \label{rem:qcp_local_dec}
\end{remark}

We end this section by showing that, for finite \qcp, the mutex
property is decidable (unlike the eager one).

\begin{proposition} \label{thm:testmutex}
  The question whether a finite \qcp is mutex, is \pspace-complete.
\end{proposition}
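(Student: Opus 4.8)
The plan is to show membership in \pspace\ and \pspace-hardness separately. For the upper bound, observe that a finite \qcp fails to be mutex precisely when there is a reachable configuration $x$ together with a simple undirected cycle $(p_0,c_1,\ldots,c_n,p_0)$ such that at least two of the channels $c_i$ are non-empty in $x$. Since the number of simple undirected cycles is finite and each can be described in space polynomial in $\size{\Topo}$, a nondeterministic procedure can first guess such a cycle and two distinct channels $c_j,c_k$ on it, and then look for a reachable configuration in which both $c_j$ and $c_k$ are non-empty. The difficulty is that channels are unbounded, so we cannot store configurations directly. The key observation is that "$c_j$ non-empty and $c_k$ non-empty" is a \emph{monotone} property: once both channels become non-empty they can be kept non-empty by simply never firing the receive actions that would empty them. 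Hence it suffices to detect the \emph{first} moment at which the second of the two channels becomes non-empty while the first is still non-empty; at that point each channel holds at most one message more than it did before, and in fact along a shortest such run each channel never needs to hold more than a polynomial number of messages --- more carefully, one argues that if such a configuration is reachable at all, it is reachable by a run in which, at every step, the contents of every channel can be truncated to a bounded-length abstraction (keeping only whether the channel is empty or not, plus the bounded information needed to fire the next receive). This yields a finite abstract transition system of size exponential in $\size{\Qcp}$ whose reachable states can be explored in \pspace\ by the standard nondeterministic reachability argument (Savitch). Membership in \pspace\ follows.

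For the lower bound, the plan is to reduce from the non-emptiness of the intersection of several regular languages, exactly as invoked for the lower bound of Proposition~\ref{thm:mutextoeager}, or equivalently from an arbitrary \pspace-computation. We take a topology containing a simple undirected cycle with (at least) two distinguished channels $c_1, c_2$, and we design the processes so that the only way to make \emph{both} $c_1$ and $c_2$ simultaneously non-empty is to complete a successful simulation of the \pspace\ instance (e.g. to verify that a common word lies in all the $L_i$, or to reach the accepting configuration of the linearly bounded Turing machine). Before a successful simulation, the processes keep the cycle mutex by draining each channel immediately after using it; upon acceptance they deliberately deposit a token in $c_1$ and a token in $c_2$ without removing either. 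Thus the \qcp is \emph{not} mutex iff the \pspace\ instance is positive, giving \pspace-hardness; combined with the upper bound this proves \pspace-completeness.

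The main obstacle I expect is the upper-bound argument: making precise the claim that, in searching for a reachable non-mutex configuration, one may restrict attention to runs along which all channel contents stay boundedly describable. One has to be careful that firing or not firing receive actions interacts with the well-queueing-type constraints (here absent, since this is a plain finite \qcp, which helps) and with the finite-control behavior of the other processes; the cleanest route is probably to note that "$c_j$ and $c_k$ both non-empty" is witnessed the first time it holds, to bound the length of a minimal witnessing run, and thereby to bound the needed channel contents, rather than to design an ad-hoc abstraction by hand. Once the bound is in place, the \pspace\ reachability check is routine.
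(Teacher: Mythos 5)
Your lower bound is fine and matches the paper's (a reduction from non-emptiness of the intersection of regular languages), but the upper bound has a genuine gap, and it is exactly the step you yourself flag as the "main obstacle". Your argument rests on the claim that if some reachable configuration has two non-empty channels $c_j,c_k$ on a common cycle, then it is reachable by a run along which all channel contents can be truncated to a bounded abstraction ("empty or not, plus the bounded information needed to fire the next receive"). No justification is given, and as stated the claim is false: the behaviour of a receiving process depends on the exact message sequence in its input channel, and for finite \qcp there is no sound bounded truncation of channel contents in general --- if there were, the same argument would place control-state reachability of finite \qcp in \pspace, contradicting its undecidability~\cite{brand-d-1983-323-a}. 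Note also that once you have fixed a particular pair $c_j,c_k$, the minimal run making both non-empty need not have a mutex prefix (it may violate mutex on other cycles or with other channel pairs first), so nothing structural constrains the channels along it; the "monotonicity" observation (that both channels can be kept non-empty afterwards) does not help bound the search.

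The missing idea, which is how the paper argues, is to look at a run of \emph{minimal length} from $x_\init$ to \emph{any} non-mutex configuration $x$. By minimality, every configuration strictly before $x$ is mutex, so the prefix up to the predecessor $x'$ of $x$ is a mutex run; by Proposition~\ref{p:one-bound} this prefix is order-equivalent to an \emph{eager} run reaching the same configuration $x'$. Eager runs can be guessed and simulated on-the-fly in \pspace\ without ever storing channel contents (each send is either immediately matched or never received, so it suffices to track the global control state and which channels are non-empty --- the same observation underlying Proposition~\ref{thm:mutextoeager}). One then checks whether a single further send $c!m$ from $x'$ creates a non-mutex configuration, i.e., whether some simple undirected cycle has a channel non-empty in $x'$ and contains $c$ as another of its channels. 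This is the decisive use of the mutex-to-eager reordering that your proposal never invokes; without it, the bounded-channel claim at the heart of your \pspace\ membership argument does not go through.
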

\begin{proof}
  Assume that $\Qcp$ is not mutex and consider a run $\rho$
  of minimal length from $x_\init$ to a configuration
  $x$ that is not mutex.
  By minimality, all configurations in $\rho$ up to $x$
  are mutex.
  Let $x'$ be the predecessor of $x$ in $\rho$.

  By Proposition~\ref{p:one-bound} we can reach $x'$ by an eager
  run $\rho'$ (which is generated on-the-fly in \pspace) and test whether
  there exists in $\Qcp$ a transition $x' \act{c!m} x$ that
  violates the mutex condition for $x$.  We guess $\rho'$ in \pspace
  (see remark above) and check
  whether there exists a simple undirected cycle $(p_0,c_1,p_1,\ldots,c_n,p_n)$
  in the topology of $\Qcp$ such that one channel $c_i$ is non-empty in $x'$
  and the action $c!m$ would write on another channel of this cycle
  (i.e., $c = c_j$ for some $j \neq i$).

  \pspace-hardness follows, again, by reducing from the non-emptiness
  test of the intersection of several regular languages.
\end{proof}

\begin{proposition}
  The question whether a finite \qcp is eager, is undecidable.
\end{proposition}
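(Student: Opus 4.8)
The plan is to reduce from the (undecidable) state reachability problem for finite \qcp. First, by a standard preprocessing I would replace the given instance by a finite \qcp $\Qcp_0$ in which a distinguished process $r$ faithfully simulates a Turing machine $M$ (using a two‑channel cycle as the tape queue, in the usual way), arranged so that $r$ reaches a distinguished local state $halt$ exactly when $M$ halts. Crucially, I would use the ``computation‑history'' encoding, in which $r$ commits to $halt$ only after streaming and checking a complete, well‑formed halting tableau; this makes the simulation \emph{tamper‑resistant}, in the sense that — even after the extra transitions below are added — no run reaches $halt$ unless $M$ genuinely halts (a corrupted queue simply fails the check, and a queue that passes the check encodes a real halting run, which exists only if $M$ halts).

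Next I would build the finite \qcp $\Qcp$ from $\Qcp_0$ by keeping all its processes and channels, adding two fresh processes $p,q$ with two fresh channels $c$ ($\src(c)=p,\ \tgt(c)=q$) and $d$ ($\src(d)=q,\ \tgt(d)=p$), and then: (1) equipping each $\Qcp_0$‑process with \emph{shortcut transitions} — from every \emph{ordinary} local state it may jump to any ordinary local state, and it may perform an unmatched send of any ordinary message on an outgoing channel and jump to any ordinary local state — where ``ordinary'' excludes $halt$, the few states on the deterministic path from $halt$ to the gadget, and the messages that trigger the gadget; and (2) letting $r$, upon reaching $halt$, send a go‑signal to $p$ and to $q$, after which $p$ and $q$ run the \emph{crossing gadget}
\[
  p\colon\ q_0^p \xrightarrow{\;c!a\;} q_1^p \xrightarrow{\;d?b\;} \top_p,
  \qquad\qquad
  q\colon\ q_0^q \xrightarrow{\;d!b\;} q_1^q \xrightarrow{\;c?a\;} \top_q,
\]
with $\top_p,\top_q$ sinks and $p,q$ carrying no shortcuts. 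Let $B$ be the set of configurations of $\Qcp$ with $p$ at $\top_p$, $q$ at $\top_q$, and $c,d$ both empty.

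The core of the argument would be the two facts: $B\cap\Reach_\eager(\Qcp)=\emptyset$ always, while $B\cap\Reach(\Qcp)\neq\emptyset$ iff $r$ reaches $halt$. For the first, in any run reaching a configuration in $B$ the process $p$ performed $c!a$ then $d?b$, and $q$ performed $d!b$ then $c?a$; since the configuration has $c,d$ empty, $p$'s $c!a$ is matched by $q$'s $c?a$ and $q$'s $d!b$ is matched by $p$'s $d?b$. In an eager run a receive sits immediately after its matching send, so writing $j$ for the position of $c!a$ (hence $j{+}1$ that of $c?a$) and $k$ for the position of $d!b$ (hence $k{+}1$ that of $d?b$), the order $c!a<d?b$ on $p$ gives $j\le k$ and the order $d!b<c?a$ on $q$ gives $k\le j$, so $j=k$ — impossible, since position $j$ is an action of $p$ and position $k$ is an action of $q$. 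For the second, if $r$ reaches $halt$ it signals $p$ and $q$, which (the fresh channels being empty) can run $c!a,d!b,c?a,d?b$, reaching $B$; and by tamper‑resistance, if $r$ never reaches $halt$ then $p,q$ stay idle, $c,d$ stay empty, and $B$ is unreachable.

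It then remains to show that $\Qcp$ is eager iff $r$ does not reach $halt$. If $r$ reaches $halt$, then $B$ contributes a reachable configuration that is not eager‑reachable, so $\Qcp$ is not eager. Conversely, if $r$ never reaches $halt$, I would argue that every reachable configuration of $\Qcp$ is realized by an eager run: by tamper‑resistance every reachable configuration is ``ordinary'' (all $\Qcp_0$‑processes in ordinary states, all $\Qcp_0$‑channels holding ordinary words, $p,q$ idle, $c,d$ empty), and each such configuration is reached by an eager run that uses only shortcut transitions — fill every channel with the required word by unmatched (``growing'') sends and then jump each process to its target state, a run with no receive and hence eager. Granting this, $\Reach(\Qcp)=\Reach_\eager(\Qcp)$, and since reachability of $halt$ is undecidable, so is eagerness of finite \qcp. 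The step I expect to be the main obstacle is making shortcuts and the tamper‑resistant simulation coexist: the shortcut transitions must be permissive enough that \emph{every} configuration arising in the simulation — including the transient configurations occurring while $r$ is verifying a tableau — is realized by an eager shortcut run, yet restricted enough that they can never drive $r$ into $halt$ and thereby enable the gadget unconditionally; pinning down the boundary between ``ordinary'' and ``committed'' states (and choosing the tableau encoding so that the verification phase is simultaneously tamper‑resistant and eager‑transparent) is the delicate point.
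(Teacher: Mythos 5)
Your crossing gadget is sound: a configuration in which the two matched pairs on the fresh channels $c,d$ are ``crossed'' can never be reached eagerly, and this is a fine witness of non-eagerness. The problem is the other half of your reduction, which you yourself flag and leave unresolved --- and it is not a technicality but the heart of the matter. To get ``$halt$ unreachable $\Rightarrow$ $\Qcp$ eager'' you add shortcut transitions that let every process jump between ordinary states and flood every channel with arbitrary unmatched ordinary messages; but once channel contents and process states can be forged at will, the claim ``$halt$ is reachable in $\Qcp$ iff $M$ halts'' is unsubstantiated, and with any concrete implementation I can see it fails. The process $r$ is finite-state and reads from channels whose contents the shortcuts can fill with any word over the ordinary alphabet; it therefore cannot verify that a streamed tableau is a genuine computation history of $M$ from its initial configuration (that language is not regular, and any partner process or cross-checking channel you might enlist is equally corruptible). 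Concretely, one can inject a locally well-formed fake tableau, or a perfectly legal intermediate configuration from which $M$ halts even though $M$ never reaches it from its initial configuration, and drive $r$ into $halt$ --- so your system can fail to be eager even when $M$ does not halt. Excluding more messages or states from the shortcuts to block such injections re-creates the opposite problem: the base queue-simulation of a Turing machine is inherently non-eager (messages are received long after being sent), so without sufficiently permissive shortcuts the non-halting case no longer yields an eager \qcp. Your two requirements --- ``eager by construction off the gadget'' and ``faithful simulation'' --- pull in opposite directions, and nothing in the proposal reconciles them.

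The paper avoids this tension by reducing from a different undecidable problem, universality of a rational relation $K \subseteq A^* \times B^*$, rather than from reachability. In its four-process construction, the only runs that cannot be reordered eagerly are those using a $\mathtt{\$}$-handshake (the analogue of your crossing gadget), and they reach the family of configurations with state $(2,5)$, empty handshake channels, and arbitrary contents in $A^*$ and $B^*$ on the two outgoing channels; the alternative, eager route goes through an automaton for $K$ and produces exactly the pairs in $K$. Hence the \qcp is eager iff $K = A^* \times B^*$, with no need for any ``tamper-resistance'' or ``every reachable configuration is eager-reachable by construction'' claim. If you want to salvage your approach, you should look for a source problem of this universality/inclusion flavour (where the non-eager route generates all pairs and the eager route generates a given relation), rather than trying to make a halting-style simulation coexist with eagerness-enforcing shortcuts.
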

\begin{proof}
  We show a reduction from the universality problem for rational
  relations~\cite{berstel79}. Given such a relation
  $K \subseteq A^* \times B^*$, we ask whether $K = A^* \times B^*$.
  Here, $K$ is described by a finite automaton $\Aa_K$ over the alphabet
  $A \cup B$.

  We describe a finite \qcp over four processes, called $p_0,\ldots,p_3$,
  and four channels $c_{01}, c_{10}, c_{12}, c_{13}$ satisfying
  $p_0\dchannel{}{c_{01}}{}p_1$,
  $p_1\dchannel{}{c_{10}}{}p_0$,
  $p_0\dchannel{}{c_{02}}{}p_2$,
  $p_0\dchannel{}{c_{03}}{}p_3$.
  Process $p_0$ is described in Fig.~\ref{fig:P0}.
  The ingoing (outgoing, resp.) edges of $\Aa_K$ lead to the
  initial state (from the final states, resp.). Transition labels
  $a \in A$ in $\Aa_K$ are replaced by $c_{02} ! a$, and labels
  $b \in B$ are replaced by $c_{03} ! b$.

  Process $p_1$ is described in Fig.~\ref{fig:P1}.
  The LTS $\Lts^{p_2} = \Lts^{p_3}$ of processes $p_2,p_3$ consist of a
  single (initial) state without any transition.
  Therefore, when talking about ``state components'' below we only mention
  processes $p_0,p_1$.

\begin{figure}
  \centering
\begin{tikzpicture}[node distance=3cm, >=stealth', bend angle=30]
  \tikzstyle{every state} = [draw=gray, thick, fill=gray!20, minimum size=4mm, inner sep=2pt]

  \node[state,initial] (q_0)         {$0$};
  \node[state]	(q_1) [right of=q_0] {$1$};
  \node[state]	(q_2) [right of=q_1] {$2$};
  \node		(q_5) [rectangle,draw,below of=q_1,node distance=2cm,inner sep=5pt] {$\Aa_K$};

  \path[->] (q_0) edge              node [below]       {$c_{01} ! \mathtt{\$}$}     (q_1)
                  edge [bend right] node [below left]  {$\varepsilon$}                     (q_5)
            (q_1) edge              node [below]       {$c_{10} ? \mathtt{\$}$}     (q_2)
                  edge [loop above, min distance=7mm, in=60, out=120] node [above]       {$c_{02} ! a, \, c_{03} ! b$} (q_1)
            (q_5) edge [bend right] node [below right] {$\varepsilon$}                     (q_2);
\end{tikzpicture}
  \caption{Process $p_0$ ($a \in A, b\in B$)}
  \label{fig:P0}
\end{figure}
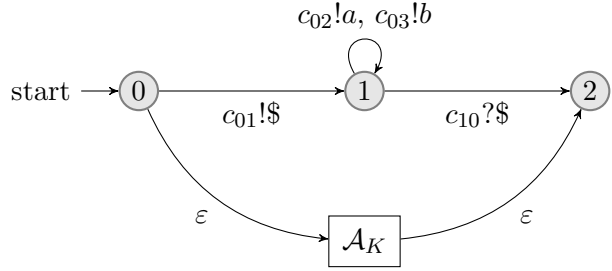

\begin{figure}
  \centering
\begin{tikzpicture}[node distance=3cm, >=stealth', bend angle=40]
  \tikzstyle{every state} = [draw=gray, thick, fill=gray!20, minimum size=4mm, inner sep=2pt]

  \node[state,initial] (q_0)	{$3$};
  \node[state]	(q_1) [right of=q_0] {$4$};
  \node[state]	(q_2) [right of=q_1] {$5$};

  \path[->] (q_0) edge              node [above] {$c_{10} ! \mathtt{\$}$} (q_1)
                  edge [bend right] node [below] {$\varepsilon$}                 (q_2)
            (q_1) edge              node [above] {$c_{01} ? \mathtt{\$}$} (q_2);
\end{tikzpicture}
  \caption{Process $p_1$}
  \label{fig:P1}
\end{figure}
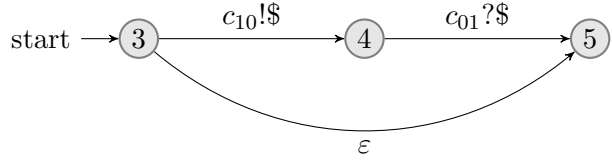

  The only runs of the above \qcp that cannot be reordered into an eager
  run are produced by $p_0$ and $p_1$ using all four
  $\mathtt{\$}$-transitions.  The state component of these
  configurations is $(2,5)$. The channel contents are
  $\varepsilon$ for $c_{01}$ and $c_{10}$, $A^*$ for $c_{02}$ and $B^*$ for
  $c_{03}$. Each of these configurations can be also reached by an
  eager run if and only if $K = A^* \times B^*$.
\end{proof}

\section{Bounded Phase Reachability}
\label{sect:bounded}

Bounded-context reachability has shown to be a successful
under-approximation method for the analysis of concurrent
Boolean programs~\cite{qadeer-s-2005-93-a}. For \rqcp, bounded-context
reachability allows us to attack the reachability problem from a
different angle than in Section~\ref{sect:converge}.  In this section,
we neither restrict the typed topology, nor constrain the runs to be
eager (or mutex). The price to pay is a (strong) restriction on the
form of the possible runs, namely a bounded number of switches between
processes (i.e., phases). Our construction
subsumes the \twodexptime algorithm for bounded-context
reachability of well-queueing recursive communicating processes, as
described in~\cite{latorre-s-2008-299-a}. Recall that the latter
algorithm is based on a reduction to bounded-phase reachability for
multi-stack systems. In contrast, our construction below is direct and
simpler.

\newcommand{\final}{f}
\newcommand{\Mux}{\textit{Mux}}
\newcommand{\Demux}{\textit{Demux}}

A \emph{phase} of an \rqcp is a run consisting of moves of a unique
process, called the \emph{phase process}. In order to get decidability
results one needs to introduce further restrictions over the
communications performed during a phase. The first, obvious,
restriction is on the typed topology $\tuple{\Topo,\tau}$: for every
channel $c$, either the source or the destination process is
restricted on $c$. Moreover, we assume for simplicity that for each
channel $c$, one of the two processes is unrestricted on $c$. The
second type of restriction concerns the kind of communication a
process is allowed to perform during a phase, and is defined by two
(dual) types of phases, called \emph{mux-phases} and
\emph{demux-phases}, respectively.

Let $c$ be a channel with \emph{source} process $p$ that is restricted
on $c$. A phase of process $p$ is a \emph{mux-phase} (with channel
$c$) if the allowed communication for $p$ is either sending into $c$,
or receiving on channels $d$ such that the source process is
restricted on $d$, see also Figure~\ref{fig:Mux-Demux}.  Dually, let
$c$ be a channel with \emph{destination} $p$ that is restricted on
$c$. A phase of process $p$ is a \emph{demux-phase} (with channel $c$)
if the allowed communication for $p$ is either receiving on $c$, or
sending on channels $d$ such that the destination process is restricted on
$d$. Demux-phases are precisely the phases/contexts used by
\cite{latorre-s-2008-299-a}.

\begin{figure}
  \centering
\begin{tikzpicture}[
  node distance=1.5cm,
  decoration={
    markings,
    mark=at position 0.62 with {\arrow{triangle 60}}
  }]

  \node (p)  {$p$};
  \node (p')  [right of=p]      {};
  \node (q1)  [above left of=p] {};
  \node (qn)  [below left of=p] {};

  \draw       (p) edge [o-] node [above, yshift=1mm] {$c$} (p')
  [decorate]  (p) to         (p')
  ;
  \draw      (q1) edge [o-*] (p)
  [decorate] (q1) to         (p)
  ;
  \draw      (qn) edge [o-*] (p)
  [decorate] (qn) to         (p)
  ;

  \draw [dotted, bend angle=30, bend right]
  ([yshift=-1mm] q1.south east) to ([yshift=1mm] qn.north east)
  ;
\end{tikzpicture}
\begin{tikzpicture}[node distance=0.5cm]
  \node (p)  {};
  \node (p') [right of=p, anchor=west]  {\small\em restricted};
  \node (q)  [below of=p] {};
  \node (q') [right of=q, anchor=west] {\small\em unrestricted};

  \draw (p) edge [o-] (p')
        (q) edge [*-] (q')
  ;
\end{tikzpicture}
\begin{tikzpicture}[
  node distance=1.5cm,
  decoration={
    markings,
    mark=at position 0.62 with {\arrow{triangle 60}}
  }]

  \node (p)  {$p$};
  \node (p')  [left of=p]        {};
  \node (q1)  [above right of=p] {};
  \node (qn)  [below right of=p] {};

  \draw       (p') edge [-o] node [above, yshift=1mm] {$c$} (p)
  [decorate]  (p') to       (p)
  ;
  \draw      (p) edge [*-o] (q1)
  [decorate] (p) to         (q1)
  ;
  \draw      (p) edge [*-o] (qn)
  [decorate] (p) to         (qn)
  ;

  \draw [dotted, bend angle=30, bend left]
  ([yshift=-1mm] q1.south west) to ([yshift=1mm] qn.north west)
  ;
\end{tikzpicture}
  \caption{Phases of an \rqcp: mux (on the left) and demux (on the right)}
  \label{fig:Mux-Demux}
\end{figure}
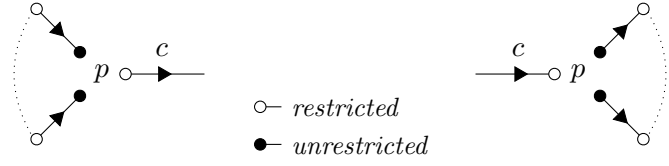

A run $\rho$ of an \rqcp is said to be \emph{$k$-bounded}, if it can
be decomposed as $\rho=\rho_1 \cdots \rho_k$ where each $\rho_j$ is a
mux- or demux-phase. A configuration $x\in X$ is
\emph{$k$-bounded-reachable} in a \rqcp $\Rqcp$ if there exists a
$k$-bounded run of $\Rqcp$ from the initial configuration $x_\init$ to
$x$.  We define the \emph{$k$-bounded-reachability set} of $\Rqcp$ as
$\Reach_k(\Rqcp)$, the set of $x \in X$ that are $k$-bounded-reachable in
$\Rqcp$.  The~\emph{state bounded-reachability problem} for \rqcp
asks for a given \rqcp $\Rqcp$, a global control state $\vec{z} \in
\vec{Z}$ and an integer $k$ (in unary encoding), whether
$\Reach_k(\Rqcp)$ intersects $\{\vec{z}\} \times (\prod_{p \in P}
(\Gamma^p)^*) \times (M^*)^C$.

In the remainder of this section we will use an extended version of
phases, still denoted as phase for convenience.
A~\emph{phase} $\phi=(p,\Pdp,z_F)$ will consist, as previously, of
a~\emph{phase process} $p \in P$ and a pushdown system
$\Pdp=(Z,z_I,A,A_\epsilon,\Gamma^p,\Delta)$ as in
Section~\ref{ssec:rqcp} (which may be, e.g., the
pushdown system of process $p$ in the \rqcp, up to changing the
initial state). In addition we specify a (control) state $z_F \in Z$,
which will be the target state of the phase. %
A phase is said to be \emph{local} if $A_\actcom$ is
empty.
The \emph{size} $|\phi|$ of a phase $\phi$ is
the number of control states of $\Pdp$.
We associate with a phase $\phi$ the binary
relation $\xrightarrow{\phi}$ over $(\prod_{p \in P} (\Gamma^p)^*)
\times (M^*)^C$, defined by
$(\vec{u}_I,\vec{v}_I)\xrightarrow{\phi}(\vec{u}_F,\vec{v}_F)$ if
there exists a run from the configuration
$(\vec{z}_I,\vec{u}_I,\vec{v}_I)$ to the configuration
$(\vec{z}_F,\vec{u}_F,\vec{v}_F)$ in the \rqcp obtained by fixing the
processes $q\not=p$ to the trivial pushdown system with one state and
no transition and the process $p$ to the pushdown system $\Pdp$. A
sequence $\Phi=(\phi_1,\ldots,\phi_k)$ of mux- or demux-phases
is called an~\emph{md-sequence}. Such a sequence is said
to be \emph{satisfiable} if the following relation holds:
$$((\varepsilon)_{p\in P},(\varepsilon)_{c\in
  C})\xrightarrow{\phi_1}\cdots\xrightarrow{\phi_k}((\varepsilon)_{p\in
  P},(\varepsilon)_{c\in C})$$
The \emph{size} of an md-sequence $\Phi=(\phi_1,\ldots,\phi_k)$ is
$|\Phi|=|\phi_1|+\cdots+|\phi_k|$.

We will decide the satisfiability of md-sequences by reducing
the problem to sequences of local phases. The reduction is
performed by replacing one by one (de)mux-phases by local
phases. We introduce a preorder over md-sequences, that will
decrease during the reduction. Let us first define the preorder
$\sqsubseteq$ over phases by letting $\phi\sqsubseteq
\psi$ if phases $\phi$ and $\psi$ have  the same phase process and the
communication actions of $\phi$ are included in the communication
actions of $\psi$. This preorder is extended component-wise over
md-sequences by letting $(\phi_1,\ldots,\phi_{k})\sqsubseteq
(\psi_1,\ldots,\psi_{k})$ if $\phi_j\sqsubseteq \psi_j$ for every
$j$.

\begin{proposition}\label{lem:red}
  Let $\Phi=(\phi_1,\ldots,\phi_k)$ be an md-sequence with at least
  one non-local phase. We can compute a finite set $F$ of
  md-sequences with $|F|\leq |\Phi|^k$ in time $\mathcal{O}(|F|)$ such
  that $\Phi$ is satisfiable if and only if $F$ contains a satisfiable
  md-sequence, and such that for every
  $\Psi=(\psi_1,\ldots,\psi_k)\in F$:
  \begin{iteMize}{$\bullet$}
  \item $|\Psi|\leq 2|\Phi|^2$
  \item $\Psi\sqsubseteq \Phi$ and there exists $j$ such that $\psi_j$
    is local whereas $\phi_j$ is not local.
  \end{iteMize}
\end{proposition}

\begin{proof}
  Since $\Phi$ contains at least one non-local phase, there exists a
  maximal index $j$ such that $\phi_j$ is demux non-local, or there
  exists a minimal index $j$ such that $\phi_j$ is mux non-local. We
  first explain why these two cases are symmetric. Given a phase
  $\phi=(p,\Pdp,z_F)$ where $\Pdp=(Z,z_I,A,A_\epsilon,\Gamma,\Delta)$,
  let $\bar{\phi}=(p,\bar{\Pdp},z_I)$ be the phase with
  $\bar{\Pdp}=(A,z_F,A,A_\epsilon,\bar{\Delta})$ the pushdown system
  obtained from $\Pdp$ by reversing the channels, exchanging push/pop
  actions and send/receive actions, and reversing the transition
  relation. We observe that
  $(\vec{u},\vec{v})\xrightarrow{\phi}(\vec{u}',\vec{v}')$ if and only
  if
  $(\vec{u}',\vec{v}')\xrightarrow{\bar{\phi}}(\vec{u},\vec{v})$. In
  particular $(\phi_1,\ldots,\phi_k)$ is satisfiable if and only if
  $(\bar{\phi}_k,\ldots,\bar{\phi}_1)$ is satisfiable. Since $\phi$ is
  a mux (resp.~demux) phase if and only if $\bar{\phi}$ is a demux
  (resp.~mux) phase, we obtain that the two cases above are
  symmetric. Thus, in the remainder of this proof we assume that there
  exists a maximal index $j$ such that $\phi_j$ is a non-local
  demux-phase.

  Let $\phi_j=(p,\Pdp,z_F)$ and
  $\Pdp=(Z,z_I,A,A_\epsilon,\Gamma,\Delta)$ be the pushdown system of
  $\phi_j$. Since $\phi_j$ is a demux-phase, messages are received
  from a unique channel, say $c$. Moreover, process $p$ is restricted
  on this channel.  Let us define the md-sequence $\Phi^\varepsilon$ from $\Phi$
  by removing communication actions in the $j$-th phase.

  In the sequel, we show how to build md-sequences
  $\Phi^\pi=(\phi_1^\pi,\ldots,\phi_k^\pi)$, where $\Phi^\pi$ is
  parametrized by a sequence $\pi=(z_r)_{s\leq r\leq j}$ of control
  states $z_r\in Z$ with $s<j$. Each sequence $\Phi^\pi$ is such that
  $\Phi^\pi\sqsubseteq \Phi$ with $\phi_j^\pi$ a local
  phase. In order to obtain a local
  phase $\phi_j^\pi$, i.e., a phase without any communication action,
  all communications with the pushdown system $\Pdp$ are simulated in
  the phases $\phi_s,\ldots,\phi_j$. Here, the integer $s$ is the
  index of the first phase that sends messages into channel $c$, that
  are received in the $j$-th phase. We show below that $\Phi$ is
  satisfiable if and only if $\Phi^\varepsilon$ or $\Phi^\pi$
  is satisfiable for some sequence $\pi$.%

 The state sequence $\pi=(z_r)_{s\leq
    r\leq j}$ provides checkpoints of the simulation of $\Pdp$ during the
  phases $\phi_s,\ldots,\phi_j$. In particular, states $z_r \in Z$
  in $\pi$ will be  
  assumed by process $p$ with empty stack, and the communication on
  channel $c$ during phase $r$ takes place between state $z_r$ and
  state $z_{r+1}$.

  Since $p$ is restricted on channel $c$, it
  receives messages from $c$ in the $j$-th phase with empty
  stack. Moreover, by the choice of $j$ and the fact that a
  satisfiable md-sequence must end with empty channels, process $p$
  sends no message during phase $j$ (otherwise, there would exist some
  demux, non-local phase after $j$, namely one where such messages
  would be received). By a well-known~\emph{saturation
    algorithm} we can compute in polynomial time (see for
    example~\cite{esparza}) from
  $\Pdp$ the set $R$ of pairs of control states $(z,z')\in Z\times Z$
  such that there exists an execution of $\Pdp$, consisting of stack
  actions and local actions only, from $(z,\epsilon)$ to $(z',\epsilon)$,
  (i.e., from empty stack to empty stack). Let
  $\phi_r=(q_r,\Pdp_r,t_{F,r})$ where
  $\Pdp_r=(T_r,t_{I,r},A,\Gamma,\Delta_r)$ with $s\leq r\leq j$.

  We first provide the definition of $\phi_r^\pi$ with $s<r<j$. Recall
  that $\pi=(z_r)_{s\leq r\leq j}$. The
  pushdown system $\Pdp_r^\pi$ is obtained by considering $|Z|$ many
  copies of $\Pdp_r$. Control states of these
  copies are identified by pairs $(t,z)\in T_r\times Z$. In these
  copies, actions that send messages to the channel $c$ are directly
  matched with actions that receive messages in $\Pdp$. More formally
  for every $(t,c!m,t')\in \Delta_r$ and $(z,c?m,z')\in\Delta$ we add
  a local action from $(t,z)$ to $(t',z')$. We also add transitions
  that simulate the effect of the stack of $\Pdp$. More precisely we
  add a local action from $(t,z)$ to $(t,z')$ for every $t\in T_r$ and
  for every $(z,z')\in R$. The initial state $t_{I,r}$ and the final
  state $t_{F,r}$ are replaced by $(t_{I,r},z_r)$ and
  $(t_{F,r},z_{r+1})$, resp.

  The definition of $\phi_s^\pi$ follows almost the same
  construction except that we should take into account the fact that
  in this phase we first perform moves that potentially send messages
  in $c$ and then non-deterministically we start to simulate the
  pushdown system $\Pdp$. The difference is due to the fact that
  some messages into channel $c$ can be received during some phase
  before the $j$-th one.  The simulation is performed with the
  construction presented in the previous paragraph. However we keep in
  $\Pdp_s^\pi$ the original pushdown system $\Pdp_s$ and we add a
  local action from $t$ to $(t,z_s)$ for every $t\in T_r$. The initial
  state $t_{I,s}$ is left unchanged and the final state $t_{s,F}$ is
  replaced by $(t_{s,F},z_{s+1})$.

  The definition of $\phi_j^\pi$ is obtained by a simpler
  construction. Since messages received from $c$ are simulated
  in the previous phases, we can remove the communication actions of
  $\Pdp$. Since the $j$-th phase may start or end with non-empty
  stack, we need in addition an extra copy of $\Pdp$  (also without communication actions). The copy of a
  control state $z$ is denoted by $\tilde{z}$. We then add a local action
  from $z_s$ to $\tilde{z}_j$ with the empty stack guard, i.e., this
  local action belongs to $A_\epsilon$. This action accounts for the
  simulation of $\Pdp$ between state $z_s$ and state $z_j$.
  Moreover, the initial control state $z_{I}$
  is left unchanged
and the final state is   replaced by $\tilde{z}_F$.

  Finally, the phases $\phi_r^\pi$ with $r<s$ or $r>j$ are equal
  to $\phi_r$. We observe that $\Phi$ is satisfiable if and only if
  $\Phi^\varepsilon$ is satisfiable or there exists a sequence $\pi$ such that $\Phi^\pi$ is
  satisfiable. Defining   $F$ as the set of md-sequences
  $\Phi^\pi$ and the additional md-sequence $\Phi^\varepsilon$ concludes the proof.
\end{proof}

\begin{corollary}\label{cor:sat}
  The satisfiability of an md-sequence $\Phi$ of length $k$ can be
  checked in time doubly exponential in $k$ (but polynomial in
 the size of  $\Phi$).
\end{corollary}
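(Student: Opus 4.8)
The plan is to iterate Proposition~\ref{lem:red} until every phase has become local, and then to decide satisfiability of the resulting local-phase md-sequences directly by pushdown reachability. Starting from $\Phi$, as long as the current md-sequence still contains a non-local phase I would apply Proposition~\ref{lem:red} to it, replacing it by the finite set $F$ of md-sequences it produces, and recurse on each element of $F$. Termination in at most $k$ rounds follows from the second bullet of Proposition~\ref{lem:red}: since every $\Psi$ in the computed set satisfies $\Psi\sqsubseteq\Phi$ (so $\psi_j\sqsubseteq\phi_j$, hence a phase that was local in $\Phi$ remains local in $\Psi$) and in addition some non-local $\phi_j$ becomes local in $\Psi$, the number of non-local phases strictly decreases at each round. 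Hence after at most $k$ applications every md-sequence in the accumulated set consists solely of local phases; note that a local phase vacuously satisfies both the mux- and the demux-phase conditions, so the objects handled at every round are genuine md-sequences. Composing the equivalences given by Proposition~\ref{lem:red} round by round, $\Phi$ is satisfiable if and only if at least one of these all-local md-sequences is satisfiable.

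Next I would treat the base case: deciding satisfiability of an md-sequence $(\phi_1,\ldots,\phi_k)$, $\phi_r=(p_r,\Pdp_r,z_{F,r})$, all of whose phases are local. Since local phases carry no communication action and a satisfiable md-sequence begins and ends with empty channels, every channel stays empty throughout and can be ignored; only the process stacks are relevant. Phase $\phi_r$ changes only the stack of $p_r$, via a run of $\Pdp_r$ from its initial state using stack and local actions only. As distinct processes do not interact through local phases, the sequence is satisfiable if and only if, for each process $q$, the subsequence of phases with phase process $q$ forms a consistent pushdown computation: linking these phases by $\varepsilon$-moves from the target state of one to the initial state of the next, and requiring empty stack at the two endpoints, yields a single pushdown system for which one asks empty-stack-to-empty-stack reachability. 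This is decidable in polynomial time via the saturation algorithm already used in the proof of Proposition~\ref{lem:red}.

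Finally I would carry out the complexity bookkeeping. Write $n=|\Phi|$ and let $N_i$ bound the size of every md-sequence after $i$ rounds. Then $N_0=n$ and, by the first bullet of Proposition~\ref{lem:red}, $N_{i+1}\le 2N_i^2$, whence $N_i\le 2^{2^i-1}n^{2^i}$ and in particular $N_k\le (2n)^{2^k}$: polynomial in $n$ (the exponent depends only on $k$) and doubly exponential in $k$. The number of md-sequences is multiplied by at most $N_i^k$ at round $i$ (the bound $|F|\le|\Psi|^k$), so after $k$ rounds it is at most $\prod_{i<k}N_i^k\le N_k^{k^2}$, again polynomial in $n$ and doubly exponential in $k$; each invocation of Proposition~\ref{lem:red} costs time $\mathcal{O}(|F|)$, and the final polynomial-time local checks, one per surviving sequence, stay within the same bound. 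This yields the claimed bound: doubly exponential in $k$, polynomial in $|\Phi|$.

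The part requiring the most care is not conceptual but the compounding of the two blow-ups of Proposition~\ref{lem:red}: each round both squares the size of a sequence ($N\mapsto 2N^2$) and multiplies the number of sequences by $N^k$, and one must check that $k$ iterations of these operations keep the exponent of $n$ independent of $n$ — so that the dependence on $|\Phi|$ remains polynomial — while the recurrence $N_{i+1}=2N_i^2$ produces exactly the $2^k$ in the exponent, giving the doubly-exponential dependence on $k$. The base-case argument (reducing an all-local md-sequence to a family of independent pushdown reachability queries) is routine once one observes that channels are irrelevant and processes decouple.
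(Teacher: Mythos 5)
Your proposal is correct and follows essentially the same route as the paper: iterate Proposition~\ref{lem:red} at most $k$ times (one non-local phase becomes local per round, and $\sqsubseteq$ keeps local phases local), bound the sequence sizes by the recurrence $N_{i+1}\le 2N_i^2$ and the number of sequences by the per-round branching $N_i^k$, and decide the surviving all-local md-sequences in polynomial time via empty-stack pushdown reachability. Your bookkeeping and your explicit decoupling argument for the local base case are, if anything, slightly more detailed than the paper's.
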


\begin{proof}
  Since the reduction introduced by applying Proposition~\ref{lem:red}
  transforms at least one non local phase into a local one,
  after at most $k$ steps we obtain a finite set $F$ of local
  phases. Moreover an immediate induction based on Proposition~\ref{lem:red}
  also shows that every $\Psi\in F$ has size $|\Psi|\leq
  2^k|\Phi|^{2^k}$. The size of $F$ can be bounded by the number
  of leaves of a tree of height $k$ with rank bounded by
  $(2^k|\Phi|^{2^k})^k$. Thus $|F|\leq ((2^k|\Phi|^{2^k})^k)^k$. The
  satisfiability of a sequence $\Psi\in F$ can be performed in time
  $\mathcal{O}(|\Psi|^2)$, since the empty stack control state reachability
  problem for pushdown systems is decidable in polynomial time. We
  conclude that the satisfiability of an md-sequence can be checked in
  \twodexptime, but polynomially in $|\Phi|$ when $k$ is fixed.
\end{proof}

\begin{theorem}\label{th:bounded}

The state bounded-reachability problem for \rqcp with typed topology
such that each channel is restricted at least at one extremity, is
\twodexptime-complete. If the number of phases and the typed topology are not part of the
input, the problem can be solved in polynomial time.
\end{theorem}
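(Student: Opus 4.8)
The plan is to prove Theorem~\ref{th:bounded} in three pieces: the upper bound, the lower bound, and the parametrized (fixed-topology, fixed-$k$) claim. For the \twodexptime upper bound, the first step is to reduce the state bounded-reachability problem to the satisfiability of an md-sequence. Given an \rqcp $\Rqcp$, a target global control state $\vec{z}$, and a bound $k$ (in unary), we observe that a $k$-bounded run witnessing reachability of $\vec{z}$ decomposes into $k$ phases, each of which is a mux- or demux-phase of some process. Since we may freely modify each $\Pdp^p$ to guess its own entry control state at the start of a phase and check its target control state at the end, and since we can force all stacks and channels to be empty at the very end by appending flushing gadgets (as in the proofs of Section~\ref{sect:converge}), reachability of $\vec{z}$ is equivalent to satisfiability of one of polynomially many (in $|\Rqcp|$, exponentially many choices of the phase-process/phase-type/entry-state data, but at most $(|P|\cdot|\Rqcp|)^k$ of them, which is fine since $k$ is unary and the topology is fixed or part of the input) md-sequences $\Phi$ of length $k$, each of size polynomial in $|\Rqcp|$. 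By Corollary~\ref{cor:sat}, each such satisfiability check runs in time doubly exponential in $k$ and polynomial in $|\Phi|$, hence in $|\Rqcp|$; iterating over the (at most singly exponential in the input) candidate md-sequences keeps the total time doubly exponential. This gives membership in \twodexptime.

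For the matching \twodexptime lower bound, the plan is to encode the acceptance problem for alternating Turing machines running in exponential space, or equivalently to reduce from the emptiness problem for the intersection of $2^k$ regular/context-free constraints à la the \dexptime-hardness argument in Proposition~\ref{prop:non-confluent-lb}, but now exploiting the bounded number of phases to get one more exponential. Concretely, I would adapt a standard construction (e.g.\ in the spirit of La\,Torre--Madhusudan--Parlato and of bounded-phase multi-stack lower bounds) in which $k$ phases suffice to simulate a Turing machine tape of length $2^{2^{\Omega(k)}}$ by repeatedly "copying" tape contents between a pushdown and a channel: each demux-phase reads the encoded configuration off a channel while a pushdown stacks it, and each mux-phase replays it; doubling the addressable tape size per constant number of phases yields the doubly-exponential hardness in $k$. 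The topology needed is small and fixed (a couple of processes with channels restricted at one end, as in Figure~\ref{fig:non-confluent-examples}), so the bound already holds for a fixed typed topology with $k$ part of the input in unary.

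For the last sentence of the theorem — when neither $k$ nor the typed topology are part of the input — the argument is simply that with $k$ and $\tuple{\Topo,\tau}$ fixed, Corollary~\ref{cor:sat} gives satisfiability of an md-sequence in time polynomial in $|\Phi|$, and there are only polynomially many (in $|\Rqcp|$) candidate entry-state/phase-type assignments for a fixed $k$ and fixed process set, so the whole procedure runs in polynomial time in $|\Rqcp|$.

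The main obstacle I expect is the lower bound: getting the full second exponential out of a bounded number of phases requires a careful gadget for transferring and re-indexing exponentially-large tape encodings using only the restricted communication primitives allowed (mux/demux phases, with channels restricted at one extremity and an empty-stack discipline on the restricted side), and arguing that the simulation cannot "cheat" using extra phases. The upper bound, by contrast, is essentially bookkeeping on top of Corollary~\ref{cor:sat}, once one checks that the reduction from bounded-reachability to md-sequence satisfiability only blows up the instance by a singly-exponential factor in $k$ and a polynomial factor in $|\Rqcp|$, which is absorbed by the doubly-exponential running time.
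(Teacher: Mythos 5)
Your upper bound and your polynomial-time claim follow the paper's route: the paper likewise reduces state bounded-reachability to satisfiability of md-sequences by non-deterministically committing, in advance, to the push actions that will never be matched and, per channel, to the first message that will never be received (so that the run can be assumed to end with empty stacks and channels), and then invokes Corollary~\ref{cor:sat}; your enumeration of phase processes, phase types and entry/exit control states is the same bookkeeping, and for fixed $k$ and fixed topology it indeed collapses to polynomially many md-sequences, each checkable in polynomial time. So that part is fine.

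The genuine gap is in the lower bound. First, your target is off: you propose to simulate a Turing machine with a tape of length $2^{2^{\Omega(k)}}$, but \twodexptime-hardness is obtained in the paper by simulating \emph{alternating} Turing machines of \emph{singly} exponential space $2^{k}$ (using $\mathrm{ASPACE}(2^{k})=\mathrm{DTIME}(2^{2^{O(k)}})$); a doubly-exponential tape is neither needed nor justified, and your sketch gives no mechanism for comparing positions in such huge configurations. Second, you omit the actual crux of the construction: verifying that the stack of the single pushdown process $p_0$ encodes a legal computation, which reduces to checking that exponentially long blocks ($|w|=2^{k}$) separated by $\#$ are equal (or successor-related). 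The paper does this with a chain of $\mathcal{O}(k)$ auxiliary processes $p_i,q^o_i,q^e_i$ that repeatedly split words into their odd and even subwords, halving the length at each level over a constant number of phases per level; your ``copy between a pushdown and a channel, doubling per constant number of phases'' sketch contains no such equality/successor gadget, and it is exactly this gadget that makes the second exponential appear while respecting the mux/demux discipline and the empty-stack restriction on the restricted endpoint. Third, you assert that a fixed small topology suffices and that hardness holds with only $k$ in the input; the paper's construction uses a topology of size $\mathcal{O}(k)$, growing with the space bound, and the theorem's final sentence (polynomial time when both $k$ and the topology are fixed) is consistent with the hardness requiring the topology to grow. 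Your fixed-topology claim is therefore an unproved strengthening, and it is far from clear that the standard bounded-phase multistack lower bounds transfer to a fixed typed topology under the mux/demux phase restrictions of this model.
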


\begin{proof}
  For the upper bound we can assume w.l.o.g.~that we reach the target
  control state with all stacks and channels empty. For this, we
  can choose non-deterministically the push actions that will not be
  matched and, for each channel, the first message that will be no
  longer received. The bound follows then from Corollary
  \ref{cor:sat}.

  For the lower bound we can adapt proof ideas
  from~\cite{atig-f-2008-121-a,latorre-s-2008-33-a}, by showing how
  to simulate alternating Turing machines $M$ of exponential space by
  \rqcp with typed topology as in the statement of the theorem. If the
  space bound of $M$ is $2^k$ we use $\mathcal{O}(k)$ processes,
  called $p_0$ and $p_i, q^o_i,q^e_i$, $1 \le i \le k$. Process $p_0$
  is the only one using a stack, storing an accepting computation tree
  of $M$. We will not go into the details how to encode the tree (it
  is the usual depth-first traversal of the tree, plus appropriate
  encoding of transitions), see e.g.~\cite{atig-f-2008-121-a} for
  details. Instead we explain now how to check that the contents of
  the stack of $p_0$ is a word of the form $(w\#)^m$ for some $w \in \{0,1\}^{2^k}$ and $m >0$. 

In the first phase, process $p_0$ empties its stack and while doing
this, sends the following to $q^o_1,q^e_1$:
\begin{iteMize}{$\bullet$}
\item $q^o_1$: every symbol of $w$ at an odd position,
\item $q^e_1$: every symbol of $w$ at an even position.
\end{iteMize}
Assuming that the stack content of $p_0$ is $w_1 \# \cdots w_m \#$,
the outgoing channels of $p_0$ will contain after this first stage,
the following words ($u^o$ and $u^e$ denotes
the subword  of $u$ at odd and even positions, respectively):
\begin{iteMize}{$\bullet$}
\item $w_1^o \# \cdots \# w_m^o\#$  for $(p_0,q^o_1)$,
\item $w_1^e \# \cdots \# w_m^e\#$ for $(p_0,q^e_1)$.
\end{iteMize}
In the second and third phase, process $q^o_1$, and then $q^e_1$,
receives from $p_0$ and resends each message to $p_1$. In phases 4 and
5, process $p_1$ receives $w_1^o \# \cdots \# w_m^o\#$ from $q^o_1$,
and then $w_1^e \# \cdots \# w_m^e\#$ from $q^e_1$. In each of these
phases $p_1$ resends to $q^o_2$ and $q^e_2$ its odd/even subwords as
$p_0$ above, adding a separator \$ between the two halves. So process
$p_1$ acts basically like $p_0$, but on ``input'' of the form $w_1^o \# \cdots
\# w_m^o\# \$ w_1^e \# \cdots \# w_m^e\#$, where one has to check
equality for words of length $2^{k-1}$: $w_1^o=\cdots=w_m^o$ and
$w_1^e=\cdots=w_m^e$, respectively. This procedure is iterated up to
process $p_k$, that simply checks that it receives two words from
$q^o_k,q^e_k$ of the form
$((0\#0\# +1 \#1\#)\$^+)^*$.

The above proof for stack contents of the form $w\# w\# \cdots
    w\#$ for some $w \in \{0,1\}^{2^k}$, is of course a special
    case of the Turing machine simulation, however it captures the
    main idea. For the Turing machine it is readily seen how to extend
    the proof to a sequence of configurations $w_1\# w_2\# \cdots
    w_k\#$, where $w_{i+1}$ is the successor configuration of
    $w_i$. Here, it helps to see each $w_i$ as a sequence of 3 tape symbols,
    i.e., each position stores the current symbol, plus its
    neighbors. In addition, one encodes the transitions leading from
    $w_i$ to $w_{i+1}$, say after each $\#$. For the final check, 
    process $p_k$ will check  that the first triple is consistent with the
    middle symbol of the second triple.
\end{proof}

\section{Conclusion}

\noindent{\it Applications.\ }
\qcp combine an automata-based local process model with point-to-point
communication, which results
in an intuitive and simple framework.

Since we subsume well-queueing \rqcp, we also inherit their
application domains, e.g., event-based programs. The dual restriction
to well-queueing (i.e., that sending on a channel is only possible if
the stack is empty) covers, e.g., ``interrupt based'' programming
models, i.e., threads that can receive messages \emph{while} still in
recursion, as well as extended sensor networks where peers can
collect and send data \emph{while} using their pushdown for
computations.

Figure~\ref{fig:examplearch} shows an example for non-\converging typed
topologies that are on the rise with the current focus on distributed
computing. The topology corresponds to a hierarchical overlay network as
implemented, for example, in master-worker protocols.
Intuitively, each master distributes tasks to its workers and uses their
results during its own computation.
When the latter is finished, i.e., when its stack is empty, the master
sends a result to its own master.
Therefore, channel restrictions respect the hierarchy: channels between
a master and a worker must be restricted on the worker's side.
In fact, our generic non-\convergence criterion permits additional
communications: workers of the same master may communicate
with each other via channels on which they are restricted (e.g.,
$p_5$ and $p_6$),
and we may have a communication cycle between top-level masters
(e.g., $p_1$ and $p_2$).
Notice also the use of the dual notion to well-queueing, when
sending information from lower to higher
levels. %

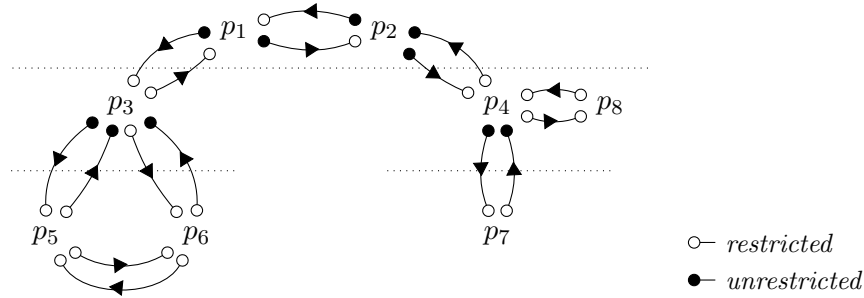
\begin{figure}[t]
  \centering
\begin{tikzpicture}[
  proc/.style = {}
  ]

   \node[proc] (l) {$p_3$};
   \draw (l)+(240:2) node[proc] (ll) {$p_5$};
   \draw (l)+(-60:2) node[proc] (lr) {$p_6$};

   \draw (l)+(1.5,1) node[proc] (t1) {$p_1$};
   \draw (l)+(3.5,1) node[proc] (t2) {$p_2$};
   \draw (l)+(5,0) node[proc] (r) {$p_4$};
   \draw ($(ll)!.5!(lr)$)+(5,0) node[proc] (rr) {$p_7$};
   \draw (r)+(1.5,0) node[proc] (r') {$p_8$};

   \begin{scope}[
     decoration={
       markings,
       mark=at position 0.56 with {\arrow{triangle 60}}
     }]

     \draw      (t2) edge[*-o,bend right=10] (r)
     [decorate] (t2) to  [    bend right=10] (r)
     ;
     \draw      (r) edge[*-o,bend right=20] (rr)
     [decorate] (r) to  [    bend right=20] (rr)
     ;
     \draw      (r) edge[o-o,bend right=20] (r')
     [decorate] (r) to  [    bend right=20] (r')
     ;
     \draw      (t1) edge[*-o,bend right=30] (l)
     [decorate] (t1) to  [    bend right=30] (l)
     ;
     \draw      (t2) edge[*-o,bend right=20] (t1)
     [decorate] (t2) to  [    bend right=20] (t1)
     ;
     \draw      (l) edge[*-o,bend right=30] (ll)
     [decorate] (l) to  [    bend right=30] (ll)
     ;
     \draw      (l) edge[o-o,bend right=10] (lr)
     [decorate] (l) to  [    bend right=10] (lr)
     ;
     \draw      (ll) edge[o-o,bend right=30] (lr)
     [decorate] (ll) to  [    bend right=30] (lr)
     ;
   \end{scope}

   \begin{scope}[
     decoration={
       markings,
       mark=at position 0.44 with {\arrowreversed{triangle 60}}
     }]

     \draw      (t2) edge[*-o,bend left=30] (r)
     [decorate] (t2) to  [    bend left=30] (r)
     ;
     \draw      (r) edge[*-o,bend left=20] (rr)
     [decorate] (r) to  [    bend left=20] (rr)
     ;
     \draw      (r) edge[o-o,bend left=20] (r')
     [decorate] (r) to  [    bend left=20] (r')
     ;
     \draw      (t1) edge[o-o,bend left=10] (l)
     [decorate] (t1) to  [    bend left=10] (l)
     ;
     \draw      (t2) edge[o-*,bend left=20] (t1)
     [decorate] (t2) to  [    bend left=20] (t1)
     ;
     \draw      (l) edge[*-o,bend left=10] (ll)
     [decorate] (l) to  [    bend left=10] (ll)
     ;
     \draw      (l) edge[*-o,bend left=30] (lr)
     [decorate] (l) to  [    bend left=30] (lr)
     ;
     \draw      (ll) edge[o-o,bend right=60] (lr)
     [decorate] (ll) to  [    bend right=60] (lr)
     ;
   \end{scope}

   \draw[dotted]
         (l)+(-1.5,0.5) coordinate (x)
         (r')+(0.5,0.5) -- (x);
   \draw[dotted]
         ($(ll)!.5!(lr)$) coordinate (x)
         ($(x)!.5!(l)$)+(-1.5,0) coordinate (y)
         (y)+(3,0) -- (y);
   \draw[dotted]
         ($(rr)!.5!(r)$)+(-1.5,0) coordinate (y)
         (y)+(3,0) -- (y);
\end{tikzpicture}
\begin{tikzpicture}[node distance=0.5cm]
  \node (p)  {};
  \node (p') [right of=p, anchor=west]  {\small\em restricted};
  \node (q)  [below of=p] {};
  \node (q') [right of=q, anchor=west] {\small\em unrestricted};

  \draw (p) edge [o-] (p')
        (q) edge [*-] (q')
  ;
\end{tikzpicture}
\caption{Non-\converging typed topology in a hierarchical
  master-worker setting}
\label{fig:examplearch}
\end{figure}

Proposition \ref{p:one-bound} allows for further
applications, since it does not assume that the \qcp is finite: we can
combine locally decidable models for
multi-threaded programs (with or without local data), as
well as local event-based programs together with eager (or mutex)
communication architectures; natural candidates for local
models would be Petri Nets, well-structured transition systems~\cite{finkel-a-2001-63-a}, or
multi-set pushdown systems~\cite{sen-k-2006-300-a}.

\medskip \noindent{\it Summary.\ } We discussed
in detail the class of eager \rqcp (as well as mutex \qcp)
which both generalize the current lineup of decidable models
for asynchronously communicating pushdown systems. Further,
we presented an optimal decision procedure for eager
\rqcp over non-\converging architectures in \dexptime,
as well as  a direct and simpler construction
for bounded phase reachability for \rqcp.

\medskip \noindent{\it Outlook.\ } This paper dealt with the most basic form of verification, namely control-state
reachability. More general reachability questions
(w.r.t.~configurations) may be interesting to consider.
Further decision problems for \qcp, like boundedness or liveness, will
be investigated in future work.

\bibliographystyle{abbrv}

\bibliography{bibliography}
\end{document}